\documentclass[journal,draftcls, onecolumn]{IEEEtran}
\usepackage[dvipdfmx]{graphicx}
\usepackage{amsmath}
\usepackage{amssymb}
\usepackage{amsfonts}
\usepackage{mathrsfs}
\usepackage{theorem}
\usepackage{color}
\usepackage{float}
\usepackage{subfig}
\DeclareMathAlphabet{\bm}{OML}{cmm}{b}{it}
\theorembodyfont{\rmfamily}
\newtheorem{theorem}{Theorem}
\newtheorem{lemma}{Lemma}

\newtheorem{remark}{Remark}
\newtheorem{proposition}{Proposition}

\newcommand{\qed}{\hfill \IEEEQED}

\newcommand{\rom}[1]{\mathrm{#1}}

\newcommand{\Pc}{\rom{P}_{\rom{c}}}

\renewcommand{\theenumi}{(\roman{enumi})}

\begin{document}

\title{On The Most Discriminative Boolean Functions for Correlated Sources}

\author{
\IEEEauthorblockN{Jun Chen, Shun Watanabe, and Lei Yu}
\thanks{Authors are listed alphabetically.}
}


\maketitle
\begin{abstract}
Motivated by a conjecture of Amari and Kobayashi, we study the problem of
identifying pairs of Boolean functions that maximize the Kullback-Leibler divergence
between two distributions obtained by separately compressing two correlated sources.
When the reference distribution corresponds to independent sources, this problem reduces
to the problem of maximizing mutual information, for which the optimality of dictator functions 
has been proved by Pichler, Piantanida, and Matz. For the problem of maximizing Fisher information,
which can be viewed as a local version of the problem studied in this paper, Amari and Kobayashi
conjectured that parity functions are optimal. 
For unbiased pairs of Boolean functions, and for identical pairs in the nonnegative correlation regime, 
we prove that both the divergence and the Fisher information are
maximized by level-$k$ functions, namely,
functions whose Fourier coefficients are supported only on level $k$. 
Since level-$k$ functions include parity functions, this gives a partial resolution of the conjecture of
Amari and Kobayashi. 
Furthermore, in the framework of Bayesian distributed one-bit hypothesis testing,
we prove that level-$k$ functions are optimal among all pairs of functions. 
Finally, we also discuss the one function version of the problem studied in this paper,
which can be regarded as the divergence analogue of the Courtade and Kumar conjecture. 
\end{abstract}

\section{Introduction}

Toward resolving open problems in distributed statistical inference \cite{han-amari:98},
Amari considered the following simple problem \cite{Amari:11}. Suppose that two  correlated binary
sources $(X^n,Y^n)$ follow a $\rho$-correlated distribution; see Fig.~\ref{Fig:sysmte}. If the sources $X^n$ and
$Y^n$ are compressed by Boolean functions $f$ and $g$, respectively, which pairs of functions
maximize the Fisher information of the resulting output distribution?

Amari proved that, when $\rho=0$, dictator functions are optimal;\footnote{More precisely,
Amari showed the optimality of dictator function for one function version of the Fisher information maximization problem.} when $\rho$ is away from zero,
he demonstrated that majority functions can outperform dictator functions \cite{Amari:11}.
Furthermore, Amari and Kobayashi observed that parity functions of multiple bits can perform
well when $\rho$ is away from zero, and posed the conjecture that the Fisher information 
is maximized by parity functions for every $\rho$ \cite{Amari:23,Amari-biograph,KobAma:23}. 
Indeed, as $\rho$ moves away from zero, taking 
the parity of $k\ge 2$ bits can yield a larger Fisher information than a dictator function, and the value of $k$ that
maximizes the Fisher information depends on $\rho$. Since parity functions weaken the correlation 
between $X^n$ and $Y^n$ compared to dictator functions, 
it is not intuitively clear why parity functions outperform dictator functions, 
making their conjecture particularly interesting.

Amari's problem is rooted in one bit compression
in distributed statistical inference and distributed hypothesis testing \cite{han-amari:98}.
On the other hand, the Fourier analysis of Boolean functions has been actively studied 
in theoretical computer science \cite{ODonnell-book}. Furthermore, over the past decade, Fourier analysis has been effectively used in information theory to study the problem of identifying Boolean functions that maximize quantities such as mutual information, often referred to as the most informative Boolean function problem \cite{AnaGohKamNai13b,CouKum:14,OrdShaWei:16,Sam:16,AnaBogChaJayNai:17,PicPiaMat:18,Pichler:thesis,BarOzg:20,LiMed:21,Yu-Tan-book,Yu:23,CheNai:24,CheGohNai:25,JavWoo:26}. 
These lines of work motivate us to study
the conjecture of Amari and Kobayashi 
through the lens of Fourier analysis of Boolean functions.

In this paper, we consider the problem of identifying the most discriminative Boolean functions in the following sense.
Suppose that two correlated binary sources $(X^n,Y^n)$ follow either $\rho_0$-correlated distribution or $\rho_1$-correlated distribution; see Fig.~\ref{Fig:sysmte}.
When the sources $X^n$ and $Y^n$ are compressed by Boolean functions $f$ and $g$, respectively, 
we ask which pairs of functions maximize the Kullback-Leibler (KL) divergence of the resulting output distributions?
When the reference distribution corresponds to independent sources, i.e., $\rho_1=0$, this problem reduces to the 
problem of maximizing mutual information, for which the optimality of dictator function has been proved by
Pichler, Piantanida, and Matz in \cite{PicPiaMat:18} (see also \cite{Pichler:thesis}). Also, the aforementioned problem of Amari can be viewed 
as a local version, i.e., in the limit where $\rho_0$ and $\rho_1$ approach each other, of the problem studied in this paper. 
In fact, for certain values of the parameters $(\rho_0,\rho_1)$, especially when $\rho_1$ is away from zero, 
we observe that parity functions yield a larger divergence than dictator functions.

In this paper, we prove that, when the functions $f$ and $g$ are unbiased, or when $f$ and $g$ are identical (not necessarily unbiased),
both the divergence and the Fisher information are maximized by level-$k$ functions, namely functions 
whose Fourier coefficients are supported only on level $k$. Since level-$k$ functions include parity functions,
this gives a partial resolution of the conjecture by Amari and Kobayashi. However, while every level-$1$ function is
necessarily a dictator function, level-$k$ functions for $k\ge 2$ are not necessarily parity functions.
Thus, the optimal value is attained by a class of functions that is broader than the class of parity functions.

As a more operationally defined problem,
we also study a problem of Bayesian one bit distributed hypothesis testing. 
In this problem, we prove that the Bayes error probability is minimized by level-$k$ functions
among all pairs of Boolean functions. 

As mentioned above, when $\rho_1=0$, the divergence reduces to the mutual information.
Thus, as a natural extension of the so-called Courtade-Kumar conjecture \cite{CouKum:14}, which concerns the one function
version of the mutual information maximization problem, we can consider the one function version of the divergence maximization problem.
However, for certain values of parameters $(\rho_0,\rho_1)$, even within the class of unbiased functions, the maximum can
be attained by functions other than level-$k$ functions. This suggest that the divergence maximization problem
exhibits different behavior in the two function and one function settings. 

In a related line of work, Girish, Cung, and Telatar studied binary distributed hypothesis
testing under linear compression schemes and showed that simple truncation is
the best linear scheme for testing correlations of opposite signs with the
same magnitude and for testing for or against independence; they further
conjectured, based on numerical evidence, that truncation remains optimal among
linear codes whenever the two correlations have opposite signs \cite{GirCunTel:26}.
In the one-bit
case, truncation corresponds to a dictator function.

The rest of the paper is organized as follows.
In Section \ref{section:Fourier}, we review some results on the Fourier analysis on Boolean cube that are used in this paper. 
In Section \ref{section:KL} and Section \ref{section:Fisher}, we consider the divergence maximization problem and the Fisher information maximization
problem, respectively. 
In Section \ref{section:Bayes}, we consider the Bayesian distributed hypothesis testing by Boolean functions.
We conclude the paper with some discussion on the one function problem in Section \ref{section:discussion}.

\begin{figure}[t]
\centering{
\includegraphics[width=0.5\textwidth]{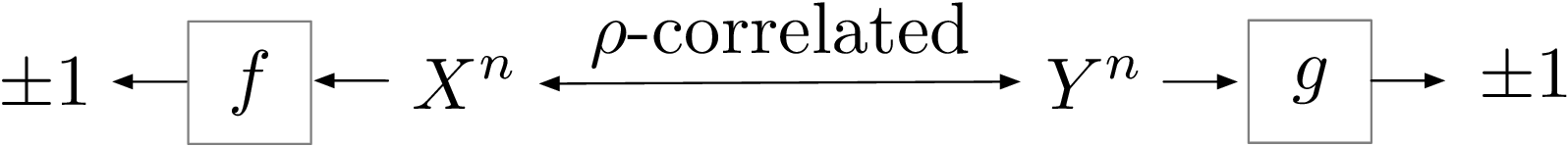}
\caption{A description of the problems studied by Amari and this paper. In Amari's problem, $(X^n,Y^)$
follow $\rho$-correlated distribution, and we are interested in identifying functions $f$ and $g$ that maximize the Fisher
information of the resulting output distribution $P_{f(X^n)g(Y^n),\rho}$. On the other hand, in the problem studied in this paper,
$(X^n,Y^n)$ follow either $\rho_0$ correlated distribution or $\rho_1$-correlated distribution, and we are interested in identifying functions 
$f$ and $g$ that maximize the KL divergence between the resulting output distributions $P_{f(X^n)g(Y^n),\rho_0}$ and 
$P_{f(X^n)g(Y^n),\rho_1}$.}
\label{Fig:sysmte}
}
\end{figure}

\section{Fourier Analysis} \label{section:Fourier}

In this section, we review some terminology and basic results in the Fourier analysis on Boolean cube; for further details, see \cite{ODonnell-book}.
Let $\mathbb{F}_2=\{0,1\}$ denote the finite field with two elements. For two elements $x,y \in \mathbb{F}_2$, we write 
$x+y$ and $xy$ for their addition and multiplication, respectively. For vectors $\bm{x}=(x_1,\ldots, x_n)$ and $\bm{y}=(y_1,\ldots,y_n)$
in $\mathbb{F}_2^n$, we denote the component-wise addition by $\bm{x}+\bm{y}$, and the inner product by $\bm{x}\cdot \bm{y}=\sum_{i=1}^n x_iy_i$.
For a subset $S \subset [n]:=\{1,\ldots,n\}$, we define parity function $\chi_S: \mathbb{F}_2^n \to \mathbb{R}$ by
\begin{align*}
\chi_S(\bm{x}) := (-1)^{\sum_{i\in S}x_i} = \prod_{i\in S} (-1)^{x_i}.
\end{align*}
For two functions $f,g:\mathbb{F}_2^n \to \mathbb{R}$, we define the inner product $\langle \cdot,\cdot\rangle$ by
\begin{align*}
\langle f,g\rangle := \sum_{\bm{x} \in \mathbb{F}_2^n} \frac{1}{2^n} f(\bm{x})g(\bm{x}).
\end{align*}
For two subset $S,T \subset [n]$, we can verify that
\begin{align} \label{eq:parity-orthogonality}
\langle \chi_S,\chi_T\rangle = \left\{
\begin{array}{ll}
1 & \mbox{if } S = T \\
0 & \mbox{if } S \neq T
\end{array}
\right..
\end{align}
Thus, the set of parity functions $\chi_S$ for $S \in 2^{[n]}$ form an orthonormal basis 
of the vector space of all functions from $\mathbb{F}_2^n$ to $\mathbb{R}$.
This basis is called Fourier basis on the Boolean cube $\mathbb{F}_2^n$, and we can expand a function $f:\mathbb{F}_2^n\to \mathbb{R}$ as
\begin{align*}
f(\bm{x}) = \sum_{S \subset [n]} \widehat{f}(S) \chi_S(\bm{x}),
\end{align*}
where 
\begin{align*}
\widehat{f}(S) = \langle f, \chi_S \rangle
\end{align*}
is a Fourier coefficient. Note that 
\begin{align} \label{eq:fourier-coefficient-empty}
\widehat{f}(\emptyset) = \mathbb{E}[f(X^n)]
\end{align}
for the uniform random vector $X^n$ on $\mathbb{F}_2^n$.

Since the parity functions form an orthonormal basis, the following Plancherel identity holds:
\begin{align} \label{eq:Plancherel}
\langle f, g \rangle = \sum_{S \subset [n]} \widehat{f}(S) \widehat{g}(S).
\end{align}
Particularly, when $f=g$, it is called the Parseval identity:
\begin{align} \label{eq:Parseval}
\langle f,f \rangle = \sum_{S \subset [n]} \widehat{f}(S)^2.
\end{align}
When a function is Boolean $f: \mathbb{F}_2^n \to \{\pm 1\}$, then 
$\langle f, f \rangle=1$ and the Parseval identity \eqref{eq:Parseval} imply
\begin{align} \label{eq:Parseval-Boolean}
\sum_{S \subset [n]} \widehat{f}(S)^2 = 1.
\end{align}
Thus, the squared Fourier coefficients can be regarded as a probability distribution.

Let $X^n = (X_1,\ldots,X_n)$ be a uniformly distributed random vector on $\mathbb{F}_2^n$.
For parameter $\rho \in [-1,1]$, let $Y^n$ be such that, for each $i \in [n]$ independently, 
\begin{align} \label{eq:rho-correlated}
Y_i = \left\{
\begin{array}{ll}
X_i & \mbox{with probability } \frac{1+\rho}{2} \\
X_i + 1 & \mbox{with probability } \frac{1-\rho}{2}
\end{array}
\right..
\end{align}
We say that $Y^n$ is $\rho$-correlated with $X^n$, or $(X^n,Y^n)$ is a $\rho$-correlated pair. 

For $\rho \in [-1,1]$, the noise operator with parameter $\rho$ is the linear operator $T_\rho$ on functions $f:\mathbb{F}_2^n \to \mathbb{R}$ 
defined by the conditional expectation
\begin{align*}
T_\rho f(\bm{x}) := \mathbb{E}[f(Y^n) | X^n = \bm{x}],
\end{align*}
where $(X^n,Y^n)$ is a $\rho$-correlated pair. For a parity function $\chi_S$, we have
\begin{align}
T_\rho \chi_S(\bm{x}) = \mathbb{E}[\chi_S(Y^n) | X^n = \bm{x}] = \prod_{i \in S} \mathbb{E}[ (-1)^{Y_i} | X_i = x_i] = \prod_{i \in S} \rho (-1)^{x_i} = \rho^{|S|} \chi_S(\bm{x}).
\label{eq:affect-of-noise-operator}
\end{align}
Thus, by linearity, we have
\begin{align} \label{eq:noise-operator-effect}
T_\rho f = \sum_{S \subset [n]} \rho^{|S|} \widehat{f}(S) \chi_S.
\end{align}

For $\rho \in [-1,1]$, the noise stability of function $f:\mathbb{F}_2^n \to \mathbb{R}$ at $\rho$ is defined by
\begin{align*}
\mathbf{Stab}_\rho[f] := \mathbb{E}[ f(X^n) f(Y^n)], 
\end{align*}
where $(X^n,Y^n)$ is a $\rho$-correlated pair. 
If $f:\mathbb{F}_2^n \to \{\pm 1\}$ is a Boolean function, we have
\begin{align*}
\mathbf{Stab}_\rho[f] &= \Pr\big( f(X^n) = f(Y^n) \big) - \Pr\big( f(X^n) \neq f(Y^n) \big) \\
&= 2 \Pr\big( f(X^n) = f(Y^n) \big) -1.
\end{align*}
By \eqref{eq:noise-operator-effect} and the orthonormality of parity functions \eqref{eq:parity-orthogonality}, we have
\begin{align}
\mathbf{Stab}_\rho[f] = \langle f, T_\rho f \rangle = \sum_{S \subset [n]} \rho^{|S|} \widehat{f}(S)^2.
\end{align}
Similarly, for two functions $f,g: \mathbb{F}_2^n \to \mathbb{R}$ and a $\rho$-correlated pair $(X^n,Y^n)$, we have
\begin{align}
\mathbb{E}[f(X^n) g(Y^n)] = \langle f, T_\rho g \rangle = \sum_{S \subset [n]} \rho^{|S|} \widehat{f}(S) \widehat{g}(S).
\end{align}
If $f$ and $g$ are Boolean, then we have
\begin{align}
\mathbb{E}[f(X^n) g(Y^n)] &=  \Pr\big( f(X^n) = g(Y^n) \big) - \Pr\big( f(X^n) \neq g(Y^n) \big) \\
&= 2 \Pr\big( f(X^n) = g(Y^n) \big) -1.
\end{align}

For Boolean functions $f$ and $g$,
\eqref{eq:fourier-coefficient-empty} implies
\begin{align}
a_{\pm 1} &:= \Pr\big( f(X^n) = \pm 1 \big) = \frac{1\pm \widehat{f}(\emptyset)}{2}, \label{eq:marginal-1} \\
b_{\pm 1} &:= \Pr\big( g(Y^n) = \pm 1 \big) = \frac{1 \pm \widehat{g}(\emptyset)}{2}. \label{eq:marginal-2}
\end{align}
Let
\begin{align}
\theta_\rho &:= \frac{1}{4}\big( \langle f, T_\rho g \rangle - \widehat{f}(\emptyset) \widehat{g}(\emptyset) \big) \label{eq:theta-definition} \\
&= \frac{1}{4} \sum_{S \subset [n]: \atop S \neq \emptyset} \widehat{f}(S) \widehat{g}(S) \rho^{|S|} . \label{eq:theta-definition-2} 
\end{align}
Then, by solving \eqref{eq:marginal-1}, \eqref{eq:marginal-2}, and 
\begin{align*}
\Pr\big( f(X^n) = g(Y^n) \big) = \frac{1+\langle f, T_\rho g \rangle}{2},
\end{align*}
the joint distribution of $f(X^n)$ and $g(Y^n)$ is given by
\begin{align} \label{eq:joint-distribution-fg}
\Pr\big( f(X^n) = u,~g(Y^n)=v \big) = a_u \cdot b_v + u\cdot v \cdot \theta_\rho, 
\end{align}
where $u,v \in \{\pm 1\}$. Particularly, when $f$ and $g$ are unbiased, i.e., $\widehat{f}(\emptyset)=\widehat{g}(\emptyset)=0$, then
\begin{align} \label{eq:joint-distribution-fg-unbiased}
\Pr\big( f(X^n) = u,~g(Y^n)=v \big) = \frac{1 + u\cdot v \cdot \sum_{S\subset [n]:\atop S\neq \emptyset }  \widehat{f}(S)\widehat{g}(S) \rho^{|S|}}{4}.
\end{align}

When Fourier coefficients of a Boolean function $f$ are concentrated on sets of size $k \in [n]$, i.e., $\widehat{f}(S) = 0$ unless $|S|=k$,
then we shall refer to such a function as a level-$k$ function. 
When $f=g$ is a level-$k$ function for $k\ge 1$,\footnote{Level-$0$ function is a constant function.} \eqref{eq:joint-distribution-fg} implies
\begin{align} \label{eq:joint-of-level-k-function}
\Pr\big( f(X^n) = u,~f(Y^n) = v \big) = \frac{1+u\cdot v \cdot \rho^k}{4}.
\end{align}

It is known that a level-$1$ function must be 
a dictator function $\pm \chi_{\{i\}}$ 
for some $i \in [n]$ (eg.~see \cite[Exercise 1.19]{ODonnell-book}). However, a level-$k$ function for $k \ge 2$ needs not be 
a parity function $\pm \chi_S$ for $|S|=k$. For instance, the function given by
\begin{align*}
f(x_1,x_2,x_3,x_4) &= \frac{1}{2} \cdot (-1)^{x_1+x_3} + \frac{1}{2} \cdot (-1)^{x_1+x_4} + \frac{1}{2}\cdot (-1)^{x_2+x_3} - \frac{1}{2} \cdot (-1)^{x_2+x_4} \\
&= (-1)^{x_1} \cdot \frac{(-1)^{x_3} + (-1)^{x_4}}{2} + (-1)^{x_2} \cdot \frac{(-1)^{x_3} - (-1)^{x_4}}{2}
\end{align*}
is a level-$2$ function that is not a parity function.


\section{KL-Divergence Setting} \label{section:KL}


For $\rho \in (-1,1)$, let $P_{X^nY^n,\rho}$ be the joint distribution
of a $\rho$-correlated pair $(X^n,Y^n)$ given by \eqref{eq:rho-correlated}.
Then, for two Boolean functions $f,g:\mathbb{F}_2^n \to \{+1,-1\}$, let $P_{f(X^n)g(Y^n),\rho}$
be the joint distribution of $(f(X^n),g(Y^n))$.
In this section, for given parameters $\rho_0,\rho_1 \in (-1,1)$, we are interested in
Boolean functions $f$ and $g$ that maximizes the KL-divergence between
$P_{f(X^n)g(Y^n),\rho_0}$ and $P_{f(X^n)g(Y^n),\rho_1}$. In fact, when $f=g$ is a level-$k$ function for $k\ge1$, then \eqref{eq:joint-of-level-k-function}
implies 
\begin{align} \label{eq:divergence-level-k}
D(P_{f(X^n)f(Y^n),\rho_0} \| P_{f(X^n)f(Y^n),\rho_1}) = d\bigg(\frac{1+\rho_0^k}{2} \bigg\| \frac{1+\rho_1^k}{2}\bigg),
\end{align}
where 
\begin{align*}
D(P\|Q) = \sum_z P(z)\ln\frac{P(z)}{Q(z)}
\end{align*}
is the KL divergence between two distributions $P$ and $Q$ on the same alphabet, and where 
$d(p\|q) = p\ln \frac{p}{q}+(1-p)\ln \frac{(1-p)}{(1-q)}$ is the binary divergence function. 
The optimal value of $k$ that maximizes \eqref{eq:divergence-level-k} depends on parameters $\rho_0$ and $\rho_1$; see Fig.~\ref{Fig:Divergence1}.
Our goal is to prove whether the divergence is optimized by a level-$k$ function for some $k$. 

\begin{figure}[t]
\centering{
\includegraphics[width=0.5\textwidth]{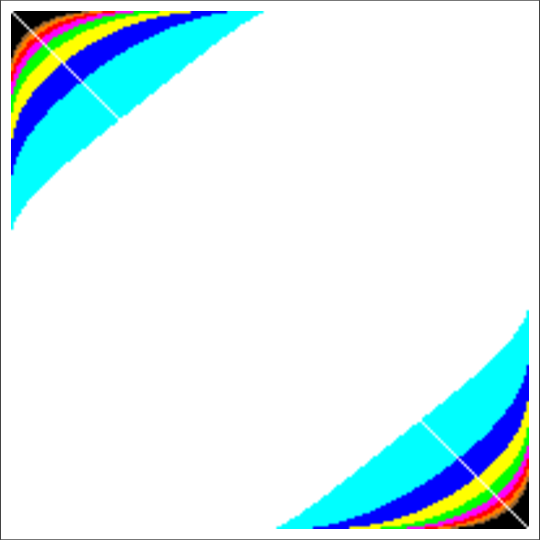}
\caption{A plot of $1 \le k \le 10$ that maximizes \eqref{eq:divergence-level-k} for $-1 \le \rho_0, \rho_1 \le 1$; 
the horizontal axis is $\rho_1$ and the vertical axis is $\rho_0$;
the left-top is $(-1,-1)$ and the right-bottom is $(1,1)$; White $\to 1$, Cyan $\to 2$, Blue $\to 3$, Yellow $\to 4$, Green $\to 5$, Magenta $\to 6$, Red $\to 7$,
Orange $\to 8$, Brown $\to 9$, and Black $\to 10$.}
\label{Fig:Divergence1}
}
\end{figure}

Among the class of unbiased Boolean functions, we can show that the divergence is maximized when
$f$ and $g$ are identical level-$k$ functions.
\begin{theorem}[Unbiased Functions] \label{theorem:unbiased}
Let $\rho_0,\rho_1 \in (-1,1)$.
Let $f,g:\mathbb{F}_2^n\to\{\pm 1\}$ be unbiased Boolean functions, i.e., $\widehat{f}(\emptyset)=\widehat{g}(\emptyset)=0$. Then, we have
\begin{align} \label{eq:theorem:unbiased}
D(P_{f(X^n)g(Y^n),\rho_0} \| P_{f(X^n)g(Y^n),\rho_1}) \le \max_{1 \le k \le n} d\bigg(\frac{1+\rho_0^k}{2} \bigg\| \frac{1+\rho_1^k}{2}\bigg).
\end{align}
\end{theorem}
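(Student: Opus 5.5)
The plan is to reduce the divergence to a convex function of two scalar correlation parameters, and then to maximize that convex function over a polytope whose vertices correspond to level-$k$ functions.

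\emph{Step 1 (reduction to two scalars).} By \eqref{eq:joint-distribution-fg-unbiased}, for unbiased $f,g$ the output distribution under $\rho_i$ is $\Pr(f=u,g=v)=(1+uv\,c_i)/4$, where
\begin{align*}
c_i := \sum_{k=1}^n w_k \rho_i^k, \qquad w_k := \sum_{S:\,|S|=k} \widehat{f}(S)\widehat{g}(S).
\end{align*}
Both distributions are uniform on the two cells $\{uv=1\}$ and uniform on the two cells $\{uv=-1\}$. Hence the divergence collapses to a binary divergence:
\begin{align*}
D(P_{f(X^n)g(Y^n),\rho_0}\|P_{f(X^n)g(Y^n),\rho_1}) = d\Big(\tfrac{1+c_0}{2}\Big\|\tfrac{1+c_1}{2}\Big) =: \varphi(c_0,c_1).
\end{align*}

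\emph{Step 2 (the weights lie in an $\ell_1$ ball).} By Cauchy--Schwarz and \eqref{eq:Parseval-Boolean},
\begin{align*}
\sum_{k=1}^n |w_k| \le \sum_{S\neq\emptyset} |\widehat{f}(S)||\widehat{g}(S)| \le 1.
\end{align*}
Consequently, the point $(c_0,c_1)=\sum_k w_k(\rho_0^k,\rho_1^k)$ lies in the convex hull of the finite set $\{0\}\cup\{\pm(\rho_0^k,\rho_1^k): 1\le k\le n\}$. To see this, write $w_k=|w_k|\cdot\mathrm{sign}(w_k)$ and put the leftover mass $1-\sum_k|w_k|$ on the origin.

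\emph{Step 3 (convexity and symmetry).} The function $\varphi$ is jointly convex on $(-1,1)^2$, because KL divergence is jointly convex in its pair of arguments and the map from $(c_0,c_1)$ to the pair of binary distributions is affine. A convex function on a polytope attains its maximum at a vertex. At the vertices we have $\varphi(0,0)=0$. We also have $\varphi(-a,-b)=\varphi(a,b)$, since $d(1-p\|1-q)=d(p\|q)$. Therefore
\begin{align*}
\varphi(c_0,c_1)\le \max_{1\le k\le n}\varphi(\rho_0^k,\rho_1^k),
\end{align*}
which is exactly \eqref{eq:theorem:unbiased} by \eqref{eq:divergence-level-k}. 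Finiteness is not an issue, since $|\rho_1|<1$ keeps $|c_1|<1$.

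The main obstacle is really just spotting the right relaxation: that the signed weights $w_k$ need only satisfy $\sum_k|w_k|\le1$. Once this is seen, the convexity argument finishes the proof immediately. Only minor care is needed to confirm joint convexity of $\varphi$ in the $(c_0,c_1)$ parametrization and to treat the sign-flipped vertices via the symmetry of $d$.
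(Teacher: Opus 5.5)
Your proof is correct. It differs from the paper's main proof but coincides with the alternative argument the paper gives in the remark right after the proof of Theorem~\ref{theorem:unbiased}; your grouping of the weights by level into $w_k$ is only cosmetic. That remark puts mass $|\widehat{f}(S)\widehat{g}(S)|$ on the point $\mathrm{sgn}(\widehat{f}(S)\widehat{g}(S))\,(\rho_0^{|S|},\rho_1^{|S|})$ and the leftover mass on the origin. It then applies joint convexity of the divergence and the symmetry $d(1-p\|1-q)=d(p\|q)$, exactly as you do.

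The paper's main proof instead goes through Lemma~\ref{lemma:inner-product-convexity}, a two-weight-vector variant of joint convexity. It is proved by showing $\psi''\le\phi''$ along the segment from $\nu$ to $\mu$, using Cauchy--Schwarz twice. With $d_S:=d\big(\tfrac{1+\rho_0^{|S|}}{2}\,\big\|\,\tfrac{1+\rho_1^{|S|}}{2}\big)$, it yields the intermediate bound $\min\big[\sum_{S\neq\emptyset}\widehat{f}(S)^2 d_S,\ \sum_{S\neq\emptyset}\widehat{g}(S)^2 d_S\big]$. There the weights come from the spectrum of a single function rather than from the overlap $|\widehat{f}(S)\widehat{g}(S)|$.

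Your route is shorter and rests only on a standard fact. The lemma is a new inequality of independent interest. As the paper notes, the two intermediate bounds are incomparable:
\begin{itemize}
\item Yours is smaller when the spectral supports of $f$ and $g$ barely overlap.
\item The lemma's is smaller when, for example, $\widehat{f}$ puts only a small mass $\epsilon$ on high-$d_S$ levels where $\widehat{g}$ is concentrated. Your weights there are of order $\sqrt{\epsilon}$ rather than $\epsilon$.
\end{itemize}
Both bounds collapse to the same $\max_{1\le k\le n}$ in the end.
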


When $X^n$ and $Y^n$ are non-negatively correlated and either $g=f$ or $g=-f$, we can show that
the divergence is maximized by level-$k$ functions.
\begin{theorem}[Biased Function] \label{theorem:biased}
Let $\rho_0,  \rho_1 \in [0,1)$. Let $f:\mathbb{F}_2^n\to \{\pm 1\}$ be a Boolean function (not necessarily unbiased).
Then, we have
\begin{align} \label{eq:ff}
D(P_{f(X^n)f(Y^n),\rho_0} \| P_{f(X^n)f(Y^n),\rho_1}) \le \max_{1 \le k \le n} d\bigg(\frac{1+\rho_0^k}{2} \bigg\| \frac{1+\rho_1^k}{2}\bigg).
\end{align}
Furthermore, for $g=-f$, we have
\begin{align} \label{eq:negative-f}
D(P_{f(X^n)g(Y^n),\rho_0} \| P_{f(X^n)g(Y^n),\rho_1}) \le \max_{1 \le k \le n} d\bigg(\frac{1+\rho_0^k}{2} \bigg\| \frac{1+\rho_1^k}{2}\bigg).
\end{align}
\end{theorem}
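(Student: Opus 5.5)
The plan is to reduce to a one-parameter family by convexity, and then show that, within that family, bias can only hurt.

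\emph{Parametrization.} Write $\mu=\widehat{f}(\emptyset)$ and $w_S=\widehat{f}(S)^2\ge 0$. By \eqref{eq:Parseval-Boolean}, $\sum_{S\neq\emptyset}w_S=1-\mu^2$. If $\mu=\pm1$ then $f$ is constant and the divergence is $0$, so assume $|\mu|<1$. For $g=f$, \eqref{eq:joint-distribution-fg} gives
\begin{align*}
P_{f(X^n)f(Y^n),\rho}(u,v)=a_ua_v+uv\,\theta_\rho,\qquad \theta_\rho=\tfrac14\sum_{S\neq\emptyset}w_S\rho^{|S|}.
\end{align*}
For fixed $\mu$, both $P_{\rho_0}$ and $P_{\rho_1}$ are affine in $\theta_{\rho_0}$ and $\theta_{\rho_1}$, respectively. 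Since KL divergence is jointly convex in its two arguments, $D(P_{\rho_0}\|P_{\rho_1})$ is a convex function of the pair $(\theta_{\rho_0},\theta_{\rho_1})$.

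\emph{Reduction to a single level.} Set $\lambda_k=\sum_{|S|=k}w_S/(1-\mu^2)$, which is a probability vector on $\{1,\ldots,n\}$. Then
\begin{align*}
(\theta_{\rho_0},\theta_{\rho_1})=\sum_k\lambda_k\,\tfrac{1-\mu^2}{4}(\rho_0^k,\rho_1^k),
\end{align*}
a convex combination. Convexity therefore bounds the divergence by its maximum over $k$ at the extreme points. At the $k$-th extreme point, writing $r_i=\rho_i^k\in[0,1)$ and using $a_+a_-=(1-\mu^2)/4$, the joint law is $a_ua_v+uv\,a_+a_-r_i$.

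A direct check shows this equals $r_i a_u\mathbf{1}[u=v]+(1-r_i)a_ua_v$. In words: draw $U\sim a$, then set $V=U$ with probability $r_i$, and otherwise draw $V$ afresh from $a$. This is a valid mixture precisely because $r_i\ge 0$, which is where the hypothesis $\rho_0,\rho_1\ge 0$ enters (odd $k$ would otherwise give negative $r_i$). The $U$-marginal is the same under both hypotheses, so the chain rule gives
\begin{align*}
D=h(a):=\sum_{u}a_u\, d\big(r_0+(1-r_0)a_u\,\big\|\,r_1+(1-r_1)a_u\big),\qquad a=a_{+1}.
\end{align*}
At $a=1/2$ this is exactly $d\big(\tfrac{1+r_0}{2}\big\|\tfrac{1+r_1}{2}\big)$, the level-$k$ value in \eqref{eq:divergence-level-k}.

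\emph{Main obstacle.} It remains to show $h(a)\le h(1/2)$ for $a\in[0,1]$. Since $h(a)=h(1-a)$, it suffices to show $h$ is concave on $[0,1]$, or at least unimodal. I would first differentiate $\phi(a)=a\,d\big(r_0+(1-r_0)a\|r_1+(1-r_1)a\big)$ twice and aim to show $\phi''\le 0$. If that fails, I would look for a data-processing argument: simulate the biased $(U,V)$ pair from the symmetric $(1+r)/2$-pair through a channel that does not depend on $r$. Checking the endpoints is reassuring, since $h(0)=h(1)=0$.

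\emph{The case $g=-f$.} The map $(u,v)\mapsto(u,-v)$ is a bijection, so the pair $(f(X^n),-f(Y^n))$ is a relabeling of $(f(X^n),f(Y^n))$ under both $\rho_0$ and $\rho_1$. KL divergence is invariant under such relabelings, so \eqref{eq:negative-f} follows immediately from \eqref{eq:ff}.
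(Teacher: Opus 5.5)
Your reduction is sound. It is the paper's argument with the two steps in the opposite order. The paper first applies Lemma~\ref{lemma:reduction-to-unbiased} to the aggregate pair $(\tau_0,\tau_1)$, with $\tau_c=\sum_{S\neq\emptyset}\widehat{f}(S)^2\rho_c^{|S|}/(1-\widehat{f}(\emptyset)^2)$, and only then uses joint convexity on $d\big(\frac{1+\tau_0}{2}\big\|\frac{1+\tau_1}{2}\big)$. You use joint convexity at fixed bias first, so you need the bias reduction separately at each level, with $(r_0,r_1)=(\rho_0^k,\rho_1^k)$. Your mixture representation, the chain-rule formula for $h$, and the relabeling argument for $g=-f$ are all correct.

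\textbf{The gap.} The inequality $h(a)\le h(1/2)$, which you list as the ``main obstacle,'' is exactly Lemma~\ref{lemma:reduction-to-unbiased}. It is the entire nontrivial content of the theorem, and your proposal does not prove it. The facts $h(0)=h(1)=0$ and $h(a)=h(1-a)$ say nothing about where the maximum lies.

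\textbf{The data-processing fallback cannot work.} Suppose a channel $K$ independent of $r$ sent $P_{1/2,r}$ to $P_{a,r}$ for two values $r_0\neq r_1$. Both families are affine in $r$, so $K$ would do the same at $r=-1$. It would then map the genuine distribution $P_{1/2,-1}$ to $P_{a,-1}$. But $P_{a,-1}$ has diagonal entry $a_u(a_u-a_{-u})$, which is negative for the smaller $a_u$ whenever $0<a<1$ and $a\neq 1/2$.

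\textbf{Your first route does close the gap, but you must carry it out.} Set $\alpha=1-r_0$, $\beta=1-r_1\in(0,1]$, $t=1-a$, and $\psi(t)=d(\alpha t\|\beta t)$. Then $\phi(a)=(1-t)\psi(t)$, so $\phi''(a)=\Gamma:=-2\psi'(t)+(1-t)\psi''(t)$.
\begin{itemize}
\item A direct check shows that $\Gamma$ and $\partial_\alpha\Gamma$ both vanish at $\alpha=\beta$.
\item With $w=1-\alpha t\ge 1-t$,
\[
\partial_\alpha^2\Gamma=\frac{2}{w^3}\Big[(1-t)-\frac{w^3}{\alpha}-tw^2-tw\Big]\le 0 .
\]
The sign holds because $w^3/\alpha+tw^2+tw\ge w^3+tw^2+tw$, which is increasing in $w$ and equals $1-t$ at $w=1-t$.
\item Hence $\Gamma$ is concave in $\alpha$ with a critical point and value $0$ at $\alpha=\beta$, so $\Gamma\le 0$.
\item Therefore $\phi$ is concave on $[0,1]$, hence so is $h(a)=\phi(a)+\phi(1-a)$. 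Together with symmetry this gives $h(a)\le h(1/2)$.
\end{itemize}
This is the same device the paper uses: a double zero at equal parameters plus a one-signed second derivative in the correlation parameter. The paper applies it to $\partial_\eta\Phi$ viewed as a function of $\bar{\tau}_0$. Your version would give slightly more, namely concavity of $h$ rather than monotonicity on $[0,1/2]$. Without some such computation, your proof is incomplete at its central step.
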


Although it has not been proved that the divergence is always maximized by level-$k$ functions,
we can show that level-$k$ functions are locally optimal in the following sense:
if one of the two functions is a level-$k$ function, then the divergence cannot exceed the value
obtained by choosing the other function to be the same level-$k$ function. 
\begin{theorem}[Local Optimality] \label{theorem:local-optimality}
Let $\rho_0,\rho_1 \in (-1,1)$.
Let $f:\mathbb{F}_2^n\to \{\pm 1\}$ be a level-$k$ function. Then, for any Boolean function $g$, we have
\begin{align*}
D(P_{f(X^n)g(Y^n),\rho_0} \| P_{f(X^n)g(Y^n),\rho_1}) \le D(P_{f(X^n)f(Y^n),\rho_0} \| P_{f(X^n)f(Y^n),\rho_1}).
\end{align*}
\end{theorem}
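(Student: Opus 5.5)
The plan is to write the joint law of $(f(X^n),g(Y^n))$ explicitly and then bound the divergence through the chain rule, conditioning on $g(Y^n)$.

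\emph{Step 1: joint law.} Since $f$ is a level-$k$ function with $k\ge 1$, it is unbiased, so $a_{\pm1}=1/2$. Write $\beta:=\widehat{g}(\emptyset)$ and $c:=\sum_{|S|=k}\widehat{f}(S)\widehat{g}(S)$. By \eqref{eq:theta-definition-2}, $\theta_\rho=\rho^k c/4$, because only level-$k$ coefficients of $f$ survive. Then \eqref{eq:joint-distribution-fg} gives
\begin{align*}
\Pr\big(f(X^n)=u,\,g(Y^n)=v\big)=\frac{1+v\beta+uv\rho^k c}{4}.
\end{align*}
The marginal of $g(Y^n)$ is $b_v=(1+v\beta)/2$ under both $\rho_0$ and $\rho_1$. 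Dropping the trivial cases $b_v=0$, the conditional law is
\begin{align*}
\Pr\big(f(X^n)=u\mid g(Y^n)=v\big)=\frac{1+u\rho^k s_v}{2},\qquad s_v:=\frac{vc}{1+v\beta}.
\end{align*}

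\emph{Step 2: feasibility, $|s_v|\le 1$.} This is the one step that needs care; I expect it to be the main (if mild) obstacle. The plan is to note that $c=\langle f,g\rangle=\mathbb{E}[f(X^n)g(X^n)]$, since $\widehat{f}(\emptyset)=0$ and $f$ lives on level $k$. Hence the same formula with $\rho=1$ is the genuine joint law of $(f(X^n),g(X^n))$. Nonnegativity of that law gives $1+v\beta\pm c\ge 0$ for each $v$, which is exactly $|c|\le 1+v\beta$, i.e.\ $|s_v|\le 1$.

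\emph{Step 3: convexity and symmetry.} Since the marginals of $g(Y^n)$ coincide under the two hypotheses, the chain rule gives
\begin{align*}
D(P_{f(X^n)g(Y^n),\rho_0}\|P_{f(X^n)g(Y^n),\rho_1})=\sum_{v}b_v\,h(s_v),\qquad h(s):=d\bigg(\frac{1+\rho_0^k s}{2}\bigg\|\frac{1+\rho_1^k s}{2}\bigg).
\end{align*}
Two properties of $h$ on $[-1,1]$ finish the argument:
\begin{itemize}
\item $h$ is convex, because both arguments are affine in $s$ and the binary divergence is jointly convex.
\item $h$ is even, because $d(p\|q)=d(1-p\|1-q)$.
\end{itemize}
A convex even function on $[-1,1]$ attains its maximum at $s=\pm1$, so $h(s_v)\le h(1)$. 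Since $\sum_v b_v=1$, the divergence is at most $h(1)$.

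\emph{Step 4: identify $h(1)$.} By \eqref{eq:divergence-level-k}, $h(1)=d\big(\frac{1+\rho_0^k}{2}\big\|\frac{1+\rho_1^k}{2}\big)=D(P_{f(X^n)f(Y^n),\rho_0}\|P_{f(X^n)f(Y^n),\rho_1})$. This is the claimed bound.
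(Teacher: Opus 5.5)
Your proof is correct, but it takes a different route from the paper.

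\textbf{The paper's route.} The paper never writes down the joint law of $(f(X^n),g(Y^n))$. It notes that $\Pr(f(X^n)=u\mid Y^n=\bm{y})=\frac{1+u\rho^k f(\bm{y})}{2}$ depends on $\bm{y}$ only through $f(\bm{y})$. Hence $f(X^n)$, $f(Y^n)$, $g(Y^n)$ form a Markov chain whose second link $K(v|z)=|\{f=z\}\cap\{g=v\}|/|\{f=z\}|$ is the same under $\rho_0$ and $\rho_1$, and the data processing inequality finishes the proof.

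\textbf{Your route.} You condition on $g(Y^n)$ rather than $f(Y^n)$, use the Fourier form of the joint law and the chain rule, and close with convexity and evenness of $h$. All of these steps check out, including the feasibility step via $\rho=1$. That step can be seen more directly: $s_v=vc/(1+v\beta)=\mathbb{E}[f(Y^n)\mid g(Y^n)=v]$, which lies in $[-1,1]$ trivially. With this identification, your convexity step is exactly the joint-convexity mechanism behind the data processing inequality, applied to the mixture $P_{f(X^n)\mid g(Y^n)=v}=\sum_{z}\Pr(f(Y^n)=z\mid g(Y^n)=v)\,P_{f(X^n)\mid f(Y^n)=z}$.

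\textbf{What each buys.} The paper's argument is shorter and needs no Fourier bookkeeping or feasibility check. It also makes transparent that $f(X^n)$ depends on $Y^n$ only through $f(Y^n)$, so the same bound holds for any $\rho$-independent processing of $Y^n$, whether randomized or with a larger output alphabet. Your argument instead gives the exact formula $D=\sum_v b_v\,h(s_v)$. This shows that the divergence depends on $g$ only through $\widehat{g}(\emptyset)$ and $\langle f,g\rangle$. Since $h$ is strictly convex whenever $\rho_0^k\neq\rho_1^k$, it also pins down the equality cases ($g=\pm f$).
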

Note that Theorem \ref{theorem:local-optimality} holds for any $k$, regardless of whether it maximizes the
divergence in \eqref{eq:divergence-level-k}.

\subsection{Proof of Theorem \ref{theorem:unbiased}}

To prove Theorem \ref{theorem:unbiased}, we first prove the following lemma; 
it can be regarded as a variation of joint convexity of the KL divergence 
that involves two weight vectors, which may be of independent interest. 
\begin{lemma} \label{lemma:inner-product-convexity}
Let $a = (a_1,\ldots,a_m), b= (b_1,\ldots,b_m) \in \mathbb{R}^m$ be unit vectors, i.e., $\sum_{i=1}^m a_i^2 =  \sum_{i=1}^m b_i^2  = 1$.
Then, for $\mu = (\mu_1,\ldots,\mu_m), \nu = (\nu_1,\ldots,\nu_m) \in (-1,1)^m$, we have\footnote{Note that, by the Cauchy-Schwarz inequality,
$\sum_{i=1}^m a_i b_i \mu_i \in (-1,1)$ and $\sum_{i=1}^m a_i b_i \nu_i \in (-1,1)$.}
\begin{align}
d\bigg( \frac{1 + \sum_{i=1}^m a_i b_i \mu_i}{2} \bigg\| \frac{1+ \sum_{i=1}^m a_i b_i \nu_i}{2} \bigg) 
\le \min\bigg[
\sum_{i=1}^m a_i^2 d\bigg( \frac{1+\mu_i}{2} \bigg\| \frac{1+\nu_i}{2} \bigg),~\sum_{i=1}^m b_i^2 d\bigg( \frac{1+\mu_i}{2} \bigg\| \frac{1+\nu_i}{2} \bigg)
\bigg].
\end{align}
\end{lemma}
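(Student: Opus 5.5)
By symmetry of the statement in $a$ and $b$ (swap the roles of the two weight vectors), it suffices to prove
\begin{align*}
F\Big(\sum_{i} a_i b_i \mu_i, \sum_i a_i b_i \nu_i\Big) \le \sum_i a_i^2 F(\mu_i,\nu_i),
\qquad F(t,s) := d\Big(\frac{1+t}{2}\Big\|\frac{1+s}{2}\Big).
\end{align*}
I would use three structural facts about $F$ on $(-1,1)^2$. First, $F$ is jointly convex, since the binary divergence is jointly convex and $(t,s)\mapsto((1+t)/2,(1+s)/2)$ is affine. Second, $F$ is even: $F(-t,-s)=F(t,s)$, by relabelling the binary alphabet. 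Third, $F(0,0)=0$. Convexity and $F(0,0)=0$ give $F(ct,cs)\le cF(t,s)$ for $c\in[0,1]$. Combined with evenness, this gives $F(ct,cs)\le |c|F(t,s)$ for $|c|\le 1$. Equivalently, by data processing, a binary symmetric channel with correlation $c$ contracts $F$ by at least a factor $|c|$.

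A first attempt is to write the pair of arguments as a convex combination with weights $w_i=(a_i^2+b_i^2)/2$ of the points $(c_i\mu_i,c_i\nu_i)$, where $c_i=2a_ib_i/(a_i^2+b_i^2)\in[-1,1]$. This yields the bound $\sum_i |a_ib_i|F(\mu_i,\nu_i)$. That is only at most the average $\tfrac12\sum_i(a_i^2+b_i^2)F(\mu_i,\nu_i)$, not the minimum. So the real content is removing the dependence on $b$ in favour of the weights $a_i^2$ alone.

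For that step I would fix $a,\mu,\nu$ and maximize the left-hand side over the unit vector $b$. The pair $(\sum_i a_ib_i\mu_i,\sum_i a_ib_i\nu_i)$ ranges over the image of the unit ball under the linear map $M:\mathbb{R}^m\to\mathbb{R}^2$ whose rows are $(a_i\mu_i)_i$ and $(a_i\nu_i)_i$. This image is a centred ellipse $E$, and by convexity of $F$ the maximum is attained on $\partial E$. The target is then the two-dimensional inequality $\max_{(t,s)\in E}F(t,s)\le\sum_i a_i^2F(\mu_i,\nu_i)$.

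I would attack this with an operational construction. Let $I$ have law $a_i^2$, and let $Z\in\{\pm1\}$ have conditional mean $\mu_I$ (resp.\ $\nu_I$) under the two hypotheses. Then the right-hand side is exactly $D(P_{IZ}\|Q_{IZ})$. By data processing, it suffices to find a channel $(i,z)\mapsto W\in\{\pm1\}$ with $\mathbb{E}[W\mid i,z]=c_iz$ and $\sum_i a_i^2c_i\mu_i=t$, $\sum_i a_i^2c_i\nu_i=s$, where $|c_i|\le1$.

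So the claim reduces to showing that every boundary point of $E$ is of the form $\sum_i a_i^2c_i(\mu_i,\nu_i)$ with $|c_i|\le1$. Points inside that zonotope but beyond it in a radial sense would also suffice, by the scaling fact above. The naive choice $c_i=b_i/a_i$ fails when $|b_i|>|a_i|$. I expect this to be the main obstacle: comparing the ellipse $M(\text{unit ball})$ with the zonotope $\{\sum_i a_i^2 c_i(\mu_i,\nu_i):|c|_\infty\le1\}$ is not automatic.

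If the containment fails in general, I would fall back on a direct calculus argument. I would use Lagrange multipliers for the maximizing $b$, which gives $b_i\propto a_i(\lambda\mu_i+\kappa\nu_i)$. I would then substitute this and verify the resulting inequality using an integral representation of $F$ along the segment from $(s,s)$ to $(t,s)$, exploiting convexity of $F$ in each coordinate separately.
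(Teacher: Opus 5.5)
There is a genuine gap. Your main reduction is false, and the tools you rely on cannot prove the lemma at all.

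Take $m=2$, $\mu=(1/2,0)$, $\nu=(0,0)$, $a=(1/10,\sqrt{99}/10)$. Then $E=\{(t,0):|t|\le 1/20\}$, while $Z=\{\sum_i a_i^2c_i(\mu_i,\nu_i):|c_i|\le1\}=\{(t,0):|t|\le 1/200\}$. So the boundary point $(1/20,0)$ of $E$ lies far outside $Z$. The radial variant adds nothing, since $Z$ is convex, symmetric and contains $0$. This is not an accident. With $v_i=(\mu_i,\nu_i)$, Jensen gives $h_Z(\ell)=\sum_i a_i^2|\langle\ell,v_i\rangle|\le\big(\sum_i a_i^2\langle\ell,v_i\rangle^2\big)^{1/2}=h_E(\ell)$, so $Z\subseteq E$, typically strictly.

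More fundamentally, every property you invoke also holds for total variation $F_{\mathrm{TV}}(t,s)=|t-s|/2$: joint convexity, invariance under relabelling, $F(0,0)=0$, the contraction $F(ct,cs)\le|c|F(t,s)$, data processing, and coordinatewise convexity. For $F_{\mathrm{TV}}$ the lemma is false. In the example above with $b=(1,0)$, the left side is $1/40$, while $\sum_i a_i^2F_{\mathrm{TV}}(\mu_i,\nu_i)=1/400$. Indeed, with $b=e_1$ and $\mu_i=\nu_i$ for $i\ge2$, the lemma reads $F(a_1t,a_1s)\le a_1^2F(t,s)$. This is the quadratic (strong data-processing) contraction of KL through a binary symmetric channel. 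Convexity only yields the linear factor $|a_1|$, which is sharp for total variation. Any proof must therefore use the specific curvature of KL.

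Your fallback names the right object, $F(t,s)=\int_s^t\frac{t-w}{1-w^2}\,dw$. However, Lagrange multipliers for the worst $b$ plus coordinatewise convexity will not close the argument. The missing step, which is exactly the paper's proof, has two parts. First, move all coordinates simultaneously along $\xi_i(\tau)=\nu_i+\tau\delta_i$ with $\delta_i=\mu_i-\nu_i$, keeping the second arguments fixed. Then both sides become $\int_0^1(1-\tau)(\cdots)\,d\tau$, equivalently functions of $\tau$ that vanish together with their first derivatives at $\tau=0$. Second, compare the integrands (the second derivatives) pointwise:
\begin{align*}
\frac{\big(\sum_i a_ib_i\delta_i\big)^2}{1-\big(\sum_i a_ib_i\xi_i\big)^2}\le\sum_i\frac{a_i^2\delta_i^2}{1-\xi_i^2}.
\end{align*}
This follows from Cauchy--Schwarz with weights $1-\xi_i^2$, namely $\big(\sum_i a_ib_i\delta_i\big)^2\le\big(\sum_i\frac{a_i^2\delta_i^2}{1-\xi_i^2}\big)\big(\sum_i b_i^2(1-\xi_i^2)\big)$. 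Combine it with $\sum_i b_i^2(1-\xi_i^2)=1-\sum_i b_i^2\xi_i^2\le 1-\big(\sum_i a_ib_i\xi_i\big)^2$, a second Cauchy--Schwarz using $\|a\|=1$. The comparison holds for every fixed unit $b$, so neither the ellipse nor the optimization over $b$ is needed.
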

\begin{proof}
For $t \in [0,1]$, let 
\begin{align*}
\psi(t) &:= d\bigg( \frac{1+ \sum_{i=1}^m a_i b_i(\nu_i + t(\mu_i-\nu_i))}{2} \bigg\| \frac{1+\sum_{i=1}^m a_i b_i \nu_i}{2}\bigg), \\
\phi(t) &:= \sum_{i=1}^m a_i^2 d\bigg( \frac{1+(\nu_i + t(\mu_i-\nu_i))}{2} \bigg\| \frac{1+\nu_i}{2}\bigg), \\
\varphi(t) &:= \sum_{i=1}^m b_i^2 d\bigg( \frac{1+(\nu_i + t(\mu_i-\nu_i))}{2} \bigg\| \frac{1+\nu_i}{2} \bigg).
\end{align*}
Then, the claim of the lemma is equivalently stated as 
\begin{align*}
\psi(1) \le \min[ \phi(1), \varphi(1)].
\end{align*}
Clearly, we have
\begin{align*}
\psi(0) = \phi(0) = \varphi(0) = 0.
\end{align*}
Moreover, we have
\begin{align*}
\psi^\prime(t) &= \frac{\sum_{i=1}^m a_i b_i (\mu_i-\nu_i)}{2} \ln \frac{\big(1+\sum_{i=1}^m a_i b_i (\nu_i + t(\mu_i-\nu_i))\big)\big(1-\sum_{i=1}^m a_i b_i \nu_i\big)}{\big(1-\sum_{i=1}^m a_i b_i (\nu_i + t(\mu_i-\nu_i))\big) \big(1+ \sum_{i=1}^m a_i b_i \nu_i\big)}, \\
\phi^\prime(t) &= \sum_{i=1}^m \frac{a_i^2(\mu_i-\nu_i)}{2} \ln \frac{\big( 1+ (\nu_i + t(\mu_i - \nu_i))\big)\big(1-\nu_i\big)}{\big(1- (\nu_i + t(\mu_i-\nu_i))\big)\big(1+\nu_i \big)}, \\
\varphi^\prime(t) &= \sum_{i=1}^m \frac{b_i^2(\mu_i-\nu_i)}{2} \ln \frac{\big(1+(\nu_i+t(\mu_i-\nu_i))\big)\big(1-\nu_i\big)}{\big(1-(\nu_i+t(\mu_i-\nu_i))\big)\big(1+\nu_i\big)}.
\end{align*}
In particular, we have
\begin{align*}
\psi^\prime(0) = \phi^\prime(0) = \varphi^\prime(0) = 0.
\end{align*}
Therefore, it suffices to show 
\begin{align} \label{eq:proof-inner-product-convexity-0}
\psi^{\prime\prime}(t) \le \min[ \phi^{\prime\prime}(t), \varphi^{\prime\prime}(t)]
\end{align}
for $t \in [0,1]$. Note that
\begin{align*}
\psi^{\prime\prime}(t) &= \frac{\big(\sum_{i=1}^m a_i b_i (\mu_i-\nu_i)\big)^2}{1 - \big(\sum_{i=1}^m a_i b_i (\nu_i + t(\mu_i - \nu_i))\big)^2}, \\
\phi^{\prime\prime}(t) &= \sum_{i=1}^m \frac{a_i^2 (\mu_i-\nu_i)^2}{1 - (\nu_i +t(\mu_i-\nu_i))^2}, \\
\varphi^{\prime\prime}(t) &= \sum_{i=1}^m \frac{b_i^2(\mu_i-\nu_i)^2}{1 - (\nu_i + t(\mu_i-\nu_i))^2}.
\end{align*}

Let $x_i := \frac{a_i(\mu_i-\nu_i)}{\sqrt{1-(\nu_i+t(\mu_i-\nu_i))^2}}$ and $y_i:= b_i \sqrt{1-(\nu_i + t(\mu_i-\nu_i))^2}$ for $i=1,\ldots,m$.
By the Cauchy-Schwarz inequality, we have
\begin{align*}
\bigg( \sum_{i=1}^m a_i b_i (\mu_i-\nu_i) \bigg)^2 &= \bigg( x_i y_i\bigg)^2 \\
&\le \bigg( \sum_{i=1}^m x_i^2 \bigg)\bigg( \sum_{i=1}^m y_i^2\bigg) \\
&= \bigg( \sum_{i=1}^m \frac{a_i^2 (\mu_i-\nu_i)^2}{1- (\nu_i +t(\mu_i-\nu_i))^2} \bigg) 
 \bigg( \sum_{i=1}^m b_i^2 \big(1- (\nu_i + t(\mu_i-\nu_i))^2\big) \bigg),
\end{align*}
which implies 
\begin{align}
\frac{\big(\sum_{i=1}^m a_i b_i (\mu_i - \nu_i)\big)^2}{\sum_{i=1}^m b_i^2\big(1 - (\nu_i + t(\mu_i-\nu_i))^2 \big)} 
 \le \sum_{i=1}^m \frac{a_i^2 (\mu_i - \nu_i)^2}{1 - (\nu_i + t(\mu_i-\nu_i))^2}. \label{eq:proof-inner-product-convexity-1}
\end{align}
Again by the Cauchy-Schwarz inequality, we have
\begin{align}
\sum_{i=1}^m b_i^2 \big( 1- (\nu_i + t(\mu_i-\nu_i))^2 \big) &= 1 - \sum_{i=1}^m b_i^2 (\nu_i + t(\mu_i-\nu_i))^2 \nonumber \\
&\le 1 - \bigg( \sum_{i=1}^m a_i b_i (\nu_i + t(\mu_i-\nu_i)) \bigg)^2. \label{eq:proof-inner-product-convexity-2}
\end{align}
Substituting \eqref{eq:proof-inner-product-convexity-2} into \eqref{eq:proof-inner-product-convexity-1} shows 
$\psi^{\prime\prime}(t) \le \phi^{\prime\prime}(t)$. By symmetry, we can also show $\psi^{\prime\prime}(t) \le \varphi^{\prime\prime}(t)$.
This completes the proof of \eqref{eq:proof-inner-product-convexity-0}.
\end{proof}

Now, we are ready to prove Theorem \ref{theorem:unbiased}. Since $f$ and $g$ are unbiased, by \eqref{eq:joint-distribution-fg-unbiased}, we have
\begin{align*}
D(P_{f(X^n)g(Y^n),\rho_0} \| P_{f(X^n)g(Y^n),\rho_1})
&= d\bigg( \frac{1+ \sum_{S\subset [n]:\atop S \neq \emptyset}  \widehat{f}(S) \widehat{g}(S) \rho_0^{|S|}}{2} \bigg\| 
  \frac{1 + \sum_{S \subset [n]: \atop S \neq \emptyset}  \widehat{f}(S) \widehat{g}(S) \rho_1^{|S|}}{2} \bigg).
\end{align*}
By noting that
\begin{align*}
\sum_{S \subset [n]: \atop S \neq \emptyset} \widehat{f}(S)^2 = \sum_{S \subset [n]: \atop S \neq \emptyset} \widehat{g}(S)^2 = 1
\end{align*}
for unbiased functions $f$ and $g$, and 
by applying Lemma \ref{lemma:inner-product-convexity} for $a = (\widehat{f}(S): S \in 2^{[n]}\backslash\{\emptyset\})$,
$b= (\widehat{g}(S): S \in 2^{[n]}\backslash\{\emptyset\})$, $\mu = (\rho_0^{|S|}: S \in 2^{[n]}\backslash\{\emptyset\})$,
and $\nu=(\rho_1^{|S|}: S \in 2^{[n]}\backslash\{\emptyset\})$, we have
\begin{align}
\lefteqn{ d\bigg( \frac{1+ \sum_{S\subset [n]:\atop S \neq \emptyset}  \widehat{f}(S) \widehat{g}(S) \rho_0^{|S|}}{2} \bigg\| 
  \frac{1 + \sum_{S \subset [n]: \atop S \neq \emptyset}  \widehat{f}(S) \widehat{g}(S) \rho_1^{|S|}}{2} \bigg) } \\
  &\le \min\bigg[
  \sum_{S \subset [n]: \atop S \neq \emptyset} \widehat{f}(S)^2 d\bigg( \frac{1+\rho_0^{|S|}}{2} \bigg\| \frac{1+\rho_1^{|S|}}{2} \bigg),
  \sum_{S \subset [n]: \atop S \neq \emptyset} \widehat{g}(S)^2 d\bigg( \frac{1+\rho_0^{|S|}}{2} \bigg\| \frac{1+\rho_1^{|S|}}{2} \bigg)
  \bigg] \label{eq:min-joint-convexity-bound} \\
  &\le \max_{1\le k \le n} d\bigg( \frac{1+\rho_0^k}{2} \bigg\| \frac{1+\rho_1^k}{2} \bigg),
\end{align}
which completes the proof. \qed

\begin{remark}
We note that Lemma \ref{lemma:inner-product-convexity} is not strictly necessary to
prove Theorem \ref{lemma:inner-product-convexity}. In fact, Theorem \ref{theorem:unbiased}
can be proved just by the joint convexity of the divergence as follows.
By the Cauchu-Schwarz inequality, we have
\begin{align}
\sum_{S \subset [n]: \atop S \neq \emptyset} |\widehat{f}(S) \widehat{g}(S)| 
\le \sqrt{\sum_{S\subset [n]} \widehat{f}(S)^2} \sqrt{\sum_{S\subset [n]} \widehat{g}(S)^2} =1.
\end{align}
Let 
\begin{align}
\alpha(S) := |\widehat{f}(S) \widehat{g}(S)|,~\beta(S) := \rom{sgn}(\widehat{f}(S) \widehat{g}(S)),
\end{align}
and $\alpha(\emptyset) := 1 - \sum_{S\subset [n]: S\neq \emptyset} \alpha(S)$.
Then, by the joint convexity of the divergence, we have
\begin{align}
D(P_{f(X^n)g(Y^n),\rho_0} \| P_{f(X^n)g(Y^n),\rho_1}) 
&= d\bigg( \frac{1+ \sum_{S\subset [n]:\atop S \neq \emptyset}  \alpha(S) \beta(S) \rho_0^{|S|}}{2} \bigg\| 
  \frac{1 + \sum_{S \subset [n]: \atop S \neq \emptyset}  \alpha(S)\beta(S) \rho_1^{|S|}}{2} \bigg) \\
 &\le \alpha(\emptyset)  d\bigg(\frac{1}{2} \bigg\| \frac{1}{2} \bigg)
  + \sum_{S\subset [n]:\atop S \neq \emptyset}  \alpha(S) d\bigg( \frac{1+\beta(S) \rho_0^{|S|}}{2} \bigg\| \frac{1+\beta(S) \rho_1^{|S|}}{2} \bigg)
  \label{eq:joint-convexity-bound} \\
  &\le \max_{1\le k \le n} d\bigg( \frac{1+\rho_0^k}{2} \bigg\| \frac{1+\rho_1^k}{2} \bigg),
\end{align}
which completes the proof of Theorem \ref{theorem:unbiased}.
Although both the bounds \eqref{eq:min-joint-convexity-bound} and \eqref{eq:joint-convexity-bound}
provide the same conclusion, they are not comparable.
\end{remark}

\subsection{Proof of Theorem \ref{theorem:biased}}

In order to prove Theorem \ref{theorem:biased}, we first investigate 
a certain property of joint distribution on $\{\pm1\} \times \{\pm 1\}$.
For $\eta \in [0,1/2]$, let $U$ be a random variable on $\{ \pm 1\}$ such that $P_U(1)=1-\eta$
and $P_U(-1) = \eta$. For $\tau \in [0,1]$, let $V$ be a random variable on $\{ \pm 1\}$ such that
$V = U$ with probability $\tau$ and $V$ is sampled independently of $U$ according to distribution $(1-\eta,\eta)$ with probability $1-\tau$.
Then, the joint distribution of $(U,V)$ is given by
\begin{align}
P_{UV,\eta,\tau} &=\left[
\begin{array}{cc}
\tau(1-\eta)+(1-\tau)(1-\eta)^2 & (1-\tau)\eta(1-\eta) \\
(1-\tau)\eta(1-\eta) & \tau \eta + (1-\tau)\eta^2 
\end{array}
\right] \\
&= \left[
\begin{array}{cc}
(1-\eta)^2 + \eta(1-\eta)\tau & \eta(1-\eta) - \eta(1-\eta)\tau \\
\eta(1-\eta) - \eta(1-\eta)\tau & \eta^2 + \eta(1-\eta)\tau
\end{array}
\right]. \label{eq:eta-tau-pair}
\end{align}
We shall call $(U,V)$ as an $\eta$-biased $\tau$-correlated pair, or just $(\eta,\tau)$-pair.

We first show the following lemma to reduce the biased case to the unbiased case.
\begin{lemma} \label{lemma:reduction-to-unbiased}
Let $\tau_0,\tau_1 \in [0,1)$. Then, for any $\eta \in [0,1/2]$, we have
\begin{align} \label{eq:maximum-unbiased}
D(P_{UV,\eta,\tau_0} \| P_{UV,\eta,\tau_1}) \le D(P_{UV,1/2,\tau_0} \| P_{UV,1/2,\tau_1}). 
\end{align}
\end{lemma}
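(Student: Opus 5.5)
The plan is to write the divergence as an average of conditional binary divergences, obtain an explicit one-variable function of $\eta$ that is symmetric about $1/2$, and show it is nondecreasing on $[0,1/2]$.

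\emph{Step 1: reduction to conditional divergences.} By \eqref{eq:eta-tau-pair}, the marginal of $U$ is $(1-\eta,\eta)$ under both $\tau_0$ and $\tau_1$. The chain rule therefore gives
\begin{align*}
D(P_{UV,\eta,\tau_0}\|P_{UV,\eta,\tau_1}) = \sum_{u} P_U(u)\, D(P_{V|U=u,\tau_0}\|P_{V|U=u,\tau_1}).
\end{align*}
From the construction of the $(\eta,\tau)$-pair:
\begin{itemize}
\item given $U=1$, $V=-1$ with probability $(1-\tau)\eta$;
\item given $U=-1$, $V=1$ with probability $(1-\tau)(1-\eta)$.
\end{itemize}
Set $c_j := 1-\tau_j \in (0,1]$ and $h(x) := d(c_0 x \| c_1 x)$ for $x\in[0,1]$. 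The left-hand side of \eqref{eq:maximum-unbiased} then equals
\begin{align*}
F(\eta) := (1-\eta)\, h(\eta) + \eta\, h(1-\eta),
\end{align*}
and the right-hand side equals $F(1/2)=h(1/2)=d\big(\frac{1+\tau_0}{2}\big\|\frac{1+\tau_1}{2}\big)$. Since $F(\eta)=F(1-\eta)$, it suffices to show that $F$ is nondecreasing on $[0,1/2]$. Equivalently, it suffices to show that $F$ is concave on $[0,1]$, since a symmetric concave function is maximized at the midpoint.

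\emph{Step 2: analysis of $h$.} I would record the explicit derivatives
\begin{align*}
h'(x) = c_0\ln\frac{c_0(1-c_1x)}{c_1(1-c_0x)} + \frac{c_1 - c_0}{1-c_1x}\cdot(1-c_0x)\cdot\frac{1}{1-c_0x}\cdot(\cdots),
\end{align*}
(to be computed carefully). I would also verify the simpler identity $h''(x) = \frac{(c_0-c_1)^2}{(1-c_0x)(1-c_1x)^2}\cdot(\text{correction})$ using the standard formula for the second derivative of $d(p(x)\|q(x))$ along a line through the origin. These give
\begin{align*}
F''(\eta) = (1-\eta)h''(\eta) + \eta h''(1-\eta) - 2\big(h'(\eta) - h'(1-\eta)\big).
\end{align*}
By the mean value theorem, $h'(1-\eta)-h'(\eta)=\int_\eta^{1-\eta} h''$. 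Hence $F''\le 0$ reduces to comparing the endpoint values $(1-\eta)h''(\eta)+\eta h''(1-\eta)$ with $2\int_\eta^{1-\eta}h''(x)\,dx$, up to sign conventions.

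\emph{Fallback.} If concavity fails or is too messy, I would prove $F'(\eta)\ge 0$ on $[0,1/2]$ directly. Here $F'(\eta) = h(1-\eta)-h(\eta) + (1-\eta)h'(\eta) - \eta h'(1-\eta)$. I would write this as an integral in $x$ of an expression involving $h''$ and use that $h''$ has an explicit rational form. A second alternative is to view $c_1$ as a variable moving from $c_0$ (where $F\equiv 0$) to its target value, and to show $\partial_{c_1}$ of $F(1/2)-F(\eta)$ has the correct sign. This mirrors the interpolation-in-$t$ trick of Lemma~\ref{lemma:inner-product-convexity}: the quantities and their first derivatives vanish at the start, so only a second-derivative comparison is needed. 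That comparison should follow from a Cauchy--Schwarz type inequality on the relevant rational functions.

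\emph{Main obstacle.} The hard step is the final sign verification, i.e., $F''\le 0$ or $F'\ge 0$ on $[0,1/2]$. It is a genuine two-parameter inequality in $(c_0,c_1)$ together with $\eta$, and the logarithmic terms do not cancel cleanly. I would first attempt the interpolation approach (in $c_1$, or in $t$ with $c_1+t(c_0-c_1)$), because it turns the logarithms into rational functions, where the comparison should be algebraic.
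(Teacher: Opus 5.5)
Your Step 1 is correct and coincides with the paper's reduction: the chain rule gives $F(\eta)=(1-\eta)h(\eta)+\eta h(1-\eta)$ with $h(x)=d(c_0x\|c_1x)$, and it suffices to show that $F$ is nondecreasing on $[0,1/2]$. The gap is that the decisive inequality is never proved. Your formulas for $h'$ and $h''$ are left with placeholders, and the one concrete computation you do give is wrong. Differentiating your (correct) expression for $F'$ gives
\[
F''(\eta)=(1-\eta)h''(\eta)+\eta h''(1-\eta)-2\big(h'(\eta)+h'(1-\eta)\big),
\]
not $-2\big(h'(\eta)-h'(1-\eta)\big)$. With your sign, $F''(1/2)=h''(1/2)$. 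This is strictly positive when $\tau_0\neq\tau_1$, because $h$ is the divergence restricted to a line and the Hessian of $d(p\|q)$ is positive definite for $p\neq q$. So $\eta=1/2$ would be a strict local minimum, which is the opposite of the lemma. With the correct sign, the mean-value reduction to $\int_\eta^{1-\eta}h''$ is no longer available. The sum $h'(\eta)+h'(1-\eta)$ retains the term $c_0\ln\frac{c_0^2(1-c_1\eta)(1-c_1(1-\eta))}{c_1^2(1-c_0\eta)(1-c_0(1-\eta))}$, so the purely algebraic comparison you were hoping for does not materialize.

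The missing idea is the choice of parameter to interpolate. Differentiate in $c_0=1-\tau_0$, the first argument of the divergence. Since $\partial_p^2 d(p\|q)=1/(p(1-p))$ does not depend on $q$,
\[
\partial_{c_0}^2F=\frac{(2-c_0)\,s}{c_0\,(1-c_0+c_0^2 s)},\qquad s=\eta(1-\eta),
\]
which contains neither logarithms nor $c_1$. Moreover, $F$ and $\partial_{c_0}F$ vanish identically in $\eta$ at $c_0=c_1$.

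The paper applies this to $\partial_\eta F$. It computes $\partial_\eta\partial_{c_0}^2F=\frac{(1-2\eta)(1-c_0)(2-c_0)}{c_0(1-c_0\eta)^2(1-c_0(1-\eta))^2}\ge 0$ on $[0,1/2]$. Hence $\partial_\eta F$ is convex in $c_0$ with a double zero at $c_0=c_1$, and so $\partial_\eta F\ge 0$.

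The same device would also rescue your concavity route. The map $s\mapsto s/(1-c_0+c_0^2 s)$ is increasing and concave, which gives $\partial_\eta^2\partial_{c_0}^2F\le 0$ and hence $\partial_\eta^2 F\le 0$.

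Of the interpolations you list, only the one moving the first argument along $c_1+t(c_0-c_1)$ has this decoupling. Your preferred interpolation in $c_1$ does not, because $\partial_q^2 d(p\|q)=p/q^2+(1-p)/(1-q)^2$ still depends on $p$. The Cauchy--Schwarz comparison you anticipate is also never specified.
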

\begin{proof}
Since $D(P_{UV,\eta,\tau_0} \| P_{UV,\eta,\tau_1})=0$ for $\eta=0$ and \eqref{eq:maximum-unbiased} trivially holds, we assume $\eta > 0$.
Denote $\bar{\tau}_c=1-\tau_c$ for $c=0,1$.
Since the marginal distributions of $P_{UV,\eta,\tau_0}$ and $P_{UV,\eta,\tau_1}$ are the same
and $P_{V|U,\eta,\tau_c}(-1|1)=\bar{\tau}_c\eta$ and $P_{V|U,\eta,\tau_c}(1|-1)=\bar{\tau}_c(1-\eta)$, by applying 
the chain rule, we have
\begin{align*}
D(P_{UV,\eta,\tau_0} \| P_{UV,\eta,\tau_1}) = (1-\eta) d(\bar{\tau}_0\eta \| \bar{\tau}_1\eta) + \eta d(\bar{\tau}_0(1-\eta) \| \bar{\tau}_1(1-\eta)) =: \Phi(\eta,\bar{\tau}_0,\bar{\tau}_1).
\end{align*}
By setting
\begin{align*}
G(\eta,\bar{\tau}_0,\bar{\tau}_1) := \frac{\partial}{\partial \eta} d(\bar{\tau}_0\eta\|\bar{\tau}_1\eta) = \bar{\tau}_0 \ln \frac{\bar{\tau}_0(1-\bar{\tau}_1\eta)}{\bar{\tau}_1(1-\bar{\tau}_0\eta)} + \frac{\bar{\tau}_1-\bar{\tau}_0}{1-\bar{\tau}_1\eta},
\end{align*}
we have
\begin{align*}
F(\eta,\bar{\tau}_0,\bar{\tau}_1) &:= \frac{\partial}{\partial \eta} \Phi(\eta,\bar{\tau}_0,\bar{\tau}_1) \\
&= - d(\bar{\tau}_0\eta\| \bar{\tau}_1\eta) + (1-\eta) G(\eta,\bar{\tau}_0,\bar{\tau}_1) + d(\bar{\tau}_0(1-\eta)\|\bar{\tau}_1(1-\eta)) - \eta G(1-\eta,\bar{\tau}_0,\bar{\tau}_1).
\end{align*}
Note that $F(\eta,\bar{\tau}_1,\bar{\tau}_1)=0$. 
In order to show that $F(\eta,\bar{\tau}_0,\bar{\tau}_1) \ge 0$, we regard $\bar{\tau}_0$ as a variable and take derivative to obtain 
\begin{align*}
F_1(\eta,\bar{\tau}_0,\bar{\tau}_1) &:= \frac{\partial}{\partial \bar{\tau}_0} F(\eta,\bar{\tau}_0,\bar{\tau}_1) \\
&= (1-2\eta)\ln \frac{\bar{\tau}_0^2(1-\bar{\tau}_1\eta)(1-\bar{\tau}_1(1-\eta))}{\bar{\tau}_1^2(1-\bar{\tau}_0\eta)(1-\bar{\tau}_0(1-\eta))} \\
&~~~ + \eta(1-\eta)\bigg[-\frac{\bar{\tau}_1}{1-\bar{\tau}_1\eta} + \frac{\bar{\tau}_1}{1-\bar{\tau}_1(1-\eta)} + \frac{\bar{\tau}_0}{1-\bar{\tau}_0\eta} - \frac{\bar{\tau}_0}{1-\bar{\tau}_0(1-\eta)} \bigg]
\end{align*}
and
\begin{align*}
F_2(\eta,\bar{\tau}_0,\bar{\tau}_1) &:= \frac{\partial^2}{\partial \bar{\tau}_0^2} F(\eta,\bar{\tau}_0,\bar{\tau}_1)  \\
&= \frac{(1-2\eta)(1-\bar{\tau}_0)(2-\bar{\tau}_0)}{\bar{\tau}_0(1-\bar{\tau}_0\eta)^2(1-\bar{\tau}_0(1-\eta))^2}.
\end{align*}
Since $F_2(\eta,\bar{\tau}_0,\bar{\tau}_1) \ge 0$ for $0<\eta\le \frac{1}{2}$ and $\bar{\tau}_0,\bar{\tau}_1\in (0,1]$, 
the function $F(\eta,\bar{\tau}_0,\bar{\tau}_1)$ is convex with respect to $\bar{\tau}_0$.
Furthermore, since $F_1(\eta,\bar{\tau}_1,\bar{\tau}_1)=0$, $F(\eta,\bar{\tau}_0,\bar{\tau}_1)$ takes global minimum $0$ at $\bar{\tau}_0=\bar{\tau}_1$.
Thus, we have $F(\eta,\bar{\tau}_0,\bar{\tau}_1) \ge 0$. Consequently, $\Phi(\eta,\bar{\tau}_0,\bar{\tau}_1)$ is non-decreasing
for $0<\eta \le \frac{1}{2}$, and we have \eqref{eq:maximum-unbiased}.
\end{proof}

Now, we are ready to prove Theorem \ref{theorem:biased}. 
Since the divergence is $0$ if $f$ is a constant function, we assume $f$ is not a constant function, which
implies $|\widehat{f}(\emptyset)| \neq 1$.
From \eqref{eq:marginal-1}, we have
\begin{align*}
\eta := \Pr\big( f(X^n)=-1) = \frac{1- \widehat{f}(\emptyset)}{2}
\end{align*}
and 
\begin{align*}
\eta(1-\eta) = \frac{1-\widehat{f}(\emptyset)^2}{4}.
\end{align*}
Without loss of generality, we can assume that $\eta \le \frac{1}{2}$; otherwise, we replace $f$ by $-f$.
By comparing \eqref{eq:joint-distribution-fg} and \eqref{eq:eta-tau-pair}, we find that, for $c= \{0,1\}$, 
the pair $(f(X^n),f(Y^n))$ induced from a $\rho_c$-correlated pair $(X^n,Y^n)$ is a$(\eta,\tau_c)$-pair with
\begin{align*}
\tau_c = \sum_{S\subset [n]:\atop S \neq \emptyset} \frac{  \widehat{f}(S)^2}{1- \widehat{f}(\emptyset)^2} \rho_c^{|S|}  \in [0,1).
\end{align*}
Thus, by applying Lemma \ref{lemma:reduction-to-unbiased}, we have
\begin{align*}
D( P_{f(X^n)f(Y^n),\rho_0} \| P_{f(X^n)f(Y^n),\rho_1})
&= D(P_{UV,\eta,\tau_0} \| P_{UV,\eta,\tau_1}) \\
&\le D(P_{UV,1/2,\tau_0} \| P_{UV,1/2,\tau_1}) \\
&= d\bigg(\frac{1+\tau_0}{2} \bigg\| \frac{1+\tau_1}{2}\bigg) \\
&\le \sum_{S \subset [n]:\atop S \neq \emptyset} \frac{ \widehat{f}(S)^2}{1- \widehat{f}(\emptyset)^2} 
d\bigg( \frac{1+\rho_0^{|S|}}{2} \bigg\| \frac{1+\rho_1^{|S|}}{2} \bigg) \\
&\le \max_{1 \le k \le n} d\bigg( \frac{1+\rho_0^k}{2} \bigg\| \frac{1+\rho_1^k}{2}\bigg),
\end{align*}
where we used the joint convexity of the KL-divergence in the second inequality. 
This completes the proof of \eqref{eq:ff}. 

Since
\begin{align*}
\Pr\big( f(X^n) = u,~-f(Y^n)=v \big) = \Pr\big( f(X^n) = u,~f(Y^n)=-v \big)
\end{align*}
and the divergence is unchanged by relabeling, \eqref{eq:negative-f} follows from \eqref{eq:ff}.
\qed

\begin{remark}
If either $\tau_0$ or $\tau_1$ is negative, \eqref{eq:maximum-unbiased} does not hold in general.
\end{remark}
\begin{remark}
Even if two functions $f$ and $g$ are different, the above proof goes through as long as 
$\widehat{f}(\emptyset)=\widehat{g}(\emptyset)$ and $\sum_{S\subset[n]:\atop S\neq\emptyset} \widehat{f}(S)\widehat{g}(S)\rho_c^{|S|}\ge0$.
\end{remark}

\subsection{Proof of Theorem \ref{theorem:local-optimality}}

Let $U=f(X^n)$, $Z=f(Y^n)$, and $V = g(Y^n)$. From \eqref{eq:affect-of-noise-operator}, we have
\begin{align*}
T_\rho f(\bm{y}) = \mathbb{E}[ f(X^n) | Y^n=\bm{y}] = \rho^k f(\bm{y}).
\end{align*}
We also have
\begin{align*}
\mathbb{E}[ f(X^n) | Y^n = \bm{y}] &= \Pr( U=1 | Y^n = \bm{y}) - \Pr( U=-1 | Y^n=\bm{y}) \\
&= 2 \Pr( U=1 | Y^n = \bm{y}) - 1 \\
&= 1 - 2 \Pr( U=-1 | Y^n=\bm{y}).
\end{align*}
Thus, we have
\begin{align*}
\Pr( U = u | Y^n = \bm{y}) = \frac{1 + u \rho^k f(\bm{y})}{2}.
\end{align*}
Note that the right hand side only depends on $f(\bm{y})$. Thus, 
$U$, $Z$, and $Y^n$ form Markov chain. Furthermore, $V$ is a function of $Y^n$,
which implies that $U$, $Z$, and $V$ form Markov chain as well.

Let 
\begin{align*}
{\cal Y}_{f,z} := \{ \bm{y} : f(\bm{y}) = z\}
\end{align*} 
and 
\begin{align*}
{\cal Y}_{g,v} := \{ \bm{y} : g(\bm{y}) = v\}.
\end{align*}
Then, 
\begin{align*}
\Pr( Y^n = \bm{y} | Z = z) = \frac{1}{|{\cal Y}_{f,z}|}
\end{align*}
and 
\begin{align*}
\Pr( V = v | Z= z) = \frac{|{\cal Y}_{f,z} \cap {\cal Y}_{g,v}|}{|{\cal Y}_{f,z}|}.
\end{align*}
Note that this probability does not depend on $\rho$, which we denote $K(v|z)$.

Now, by the data processing inequality, we have
\begin{align*}
D(P_{f(X^n)g(Y^n),\rho_0} \| P_{f(X^n)g(Y^n),\rho_1}) 
&= D(P_{UV,\rho_0} \| P_{UV, \rho_1}) \\
&= D(P_{UZ,\rho_0} \circ K \| P_{UZ,\rho_1} \circ K) \\
&\le D(P_{UZ,\rho_0} \| P_{UZ,\rho_1}) \\
&= D(P_{f(X^n)f(Y^n),\rho_0} \| P_{f(X^n)f(Y^n),\rho_1}). 
\end{align*}
\qed

\subsection{Remarks on the Remaining Case}

In this section, we briefly discuss a possible direction to tackle the remaining case, i.e., the case where $f\neq g$ are biased. 
Following the argument in \cite{PicPiaMat:18,Pichler:thesis}, 
we can consider the following relaxed problem. 
For given functions $f$ and $g$, recall the notations from Section \ref{section:Fourier}.
Without loss of generality, assume that the marginal probabilities $a_1$ and $b_1$ satisfy $\frac{1}{2} \le a_1 \le b_1 \le 1$.
Let 
\begin{align*}
{\cal P} &:= \{ S \in 2^{[n]}\backslash \{\emptyset\}: \widehat{f}(S)\widehat{g}(S) \ge 0 \}, \\
{\cal N} &:= \{ S \in 2^{[n]}\backslash \{\emptyset\}: \widehat{f}(S)\widehat{g}(S) < 0 \}
\end{align*}
be the set of subsets such that products of Fourier coefficients of $f$ and $g$ are nonnegative and negative, respectively. 
Note that
\begin{align*}
\theta_\rho &= \frac{1}{4} \sum_{S \in {\cal P}} \widehat{f}(S)\widehat{g}(S) \rho^{|S|} + \frac{1}{4} \sum_{S \in {\cal N}} \widehat{f}(S)\widehat{g}(S) \rho^{|S|}. 
\end{align*}
Since the probabilities given by \eqref{eq:joint-distribution-fg} must be nonnegative, the Fourier coefficients must satisfy
\begin{align} \label{eq:Fourier-coefficient-constraint-1}
- a_{-1}\cdot b_{-1} \le \frac{1}{4} \sum_{S \in {\cal P}} \widehat{f}(S)\widehat{g}(S) \rho^{|S|} 
+ \frac{1}{4} \sum_{S \in {\cal N}} \widehat{f}(S)\widehat{g}(S) \rho^{|S|} \le a_1 \cdot b_{-1}
\end{align}
for every $\rho \in [-1,1]$.
Furthermore, by the Cauchy-Schwarz inequality, we have
\begin{align}
\frac{1}{4} \sum_{S \in {\cal P}} \widehat{f}(S)\widehat{g}(S) - \frac{1}{4} \sum_{S \in {\cal N}} \widehat{f}(S)\widehat{g}(S)
&\le \sqrt{\sum_{ S \in 2^[n]\backslash\{\emptyset\}}\frac{ \widehat{f}(S)^2}{4}\sum_{ S \in 2^[n]\backslash\{\emptyset\}}\frac{ \widehat{g}(S)^2}{4}} \\
&= \sqrt{\frac{1-\widehat{f}(\emptyset)^2}{4} \frac{1-\widehat{g}(\emptyset)^2}{4}} \\
&= \sqrt{a_1 \cdot a_{-1} \cdot b_1 \cdot b_{-1}}. \label{eq:Fourier-coefficient-constraint-2}
\end{align}

Since it is difficult to handle the Fourier coefficients of Boolean functions, let us retain the 
conditions given by \eqref{eq:Fourier-coefficient-constraint-1} and \eqref{eq:Fourier-coefficient-constraint-2}, and consider the following set of vectors. 
For fixed disjoint sets ${\cal P}, {\cal N} \subset 2^{[n]}\backslash \{\emptyset\}$, marginal
distributions $\bm{a}=(a_1,a_{-1})$ and $\bm{b}=(b_1,b_{-1})$ satisfying 
$\frac{1}{2} \le a_1 \le b_1 \le 1$, 
let ${\cal Z}({\cal P},{\cal N},\bm{a},\bm{b})$ be the set of all
vectors $\bm{z} = (z_S: S \in {\cal P}\cup {\cal N})$ satisfying $z_S \ge 0$ for $S \in {\cal P}$, $z_S \le 0$ for $S \in {\cal N}$,
\begin{align} \label{eq:condition-z}
- a_{-1}\cdot b_{-1} \le \frac{1}{4} \sum_{S \in {\cal P}} z_S \rho^{|S|}+ \frac{1}{4} \sum_{S \in {\cal N}} z_S \rho^{|S|} \le a_1 \cdot b_{-1}
\end{align}
for every $\rho \in [-1,1]$
and
\begin{align} \label{eq:condition-z-2}
\frac{1}{4} \sum_{S \in {\cal P}} z_S - \frac{1}{4} \sum_{S \in {\cal N}} z_S 
 \le \sqrt{a_1 \cdot a_{-1} \cdot b_1 \cdot b_{-1}}.
\end{align}
Let 
\begin{align*}
P_{UV, \bm{a},\bm{b},\bm{z},\rho}(u,v) := a_u\cdot b_v + u\cdot v \cdot \theta_{\bm{z},\rho},
\end{align*}
where 
\begin{align*}
\theta_{\bm{z},\rho} := \frac{1}{4} \sum_{S \in {\cal P}} z_S \rho^{|S|} + \frac{1}{4} \sum_{S\in {\cal N}} z_S \rho^{|S|}.
\end{align*}
Under these notations, we can consider the following relaxed problem:
\begin{align} \label{eq:relaxed-optimization}
\sup_{\bm{z} \in {\cal Z}({\cal P},{\cal N},\bm{a},\bm{b})} 
 D(P_{UV, \bm{a},\bm{b},\bm{z},\rho_0} \| P_{UV, \bm{a},\bm{b},\bm{z},\rho_1})
\end{align}
for arbitrarily fixed $\bm{a}$, $\bm{b}$, ${\cal P}$, and ${\cal N}$. The optimal value of this related problem forms an upper bound on the KL-divergence between
$P_{f(X^n)g(Y^n),\rho_0}$ and $P_{f(X^n)g(Y^n),\rho_1}$ for  any Boolean functions $f$ and $g$.

When $\rho_1=0$, the argument in \cite{PicPiaMat:18,Pichler:thesis} implies that the optimal value of the 
relaxed problem \eqref{eq:relaxed-optimization} does not exceed $\ln 2 - h\big(\frac{1+\rho_0}{2}\big)$,
where $h(t)=-t\ln t-(1-t)\ln(1-t)$ is the binary entropy function. 
In fact, since $\theta_{\bm{z},0}=0$, the distribution $P_{UV, \bm{a},\bm{b},\bm{z},0}$ is a product distribution with marginals $\bm{a}$ and $\bm{b}$.
Moreover, 
the relaxed problem in \cite{PicPiaMat:18,Pichler:thesis} is derived from
the constraints \eqref{eq:condition-z} for $\rho=\rho_0,1$ and \eqref{eq:condition-z-2}. 

If we restrict our attention to $g=f$ as in Theorem \ref{theorem:biased},
then we can introduce additional constraints ${\cal N}=\emptyset$ and $\bm{a}=\bm{b}$.
For $\rho_0,\rho_1\in[0,1)$, with those additional constraints, 
we can show that the upper bound given by the relaxed problem \eqref{eq:relaxed-optimization} is tight. 
In fact, if we set $\tau_c = \frac{\theta_{\bm{z},\rho_c}}{a_1\cdot a_{-1}}$, then
we can use Lemma \ref{lemma:reduction-to-unbiased} and the joint convexity of the divergence
as in the proof of Theorem \ref{theorem:biased}. 

Unfortunately, the relaxed problem \eqref{eq:relaxed-optimization} itself
is not useful for our purpose in general $(\rho_0,\rho_1)$ and $f\neq g$. 
In fact, a vector $\bm{z} \in {\cal Z}({\cal P},{\cal N},\bm{a},\bm{b})$
may induce distributions such that $P_{UV, \bm{a},\bm{b},\bm{z},\rho_0}(1,-1)>0$ while
$P_{UV, \bm{a},\bm{b},\bm{z},\rho_1}(1,-1)=0$, and thus the value of the relaxed problem can be unbounded (a case that cannot occur in the setting of \cite{PicPiaMat:18,Pichler:thesis}). 
To derive tight bounds for general $(\rho_0,\rho_1)$ and $f\neq g$, it seems necessary to introduce 
additional conditions that Boolean functions must satisfy and that are still amenable to analysis.

\section{Fisher Information Setting} \label{section:Fisher}

In this section, we consider the problem of identifying Boolean functions that maximize the Fisher information.
For $\rho \in (-1,1)$, let $P_{X^nY^n,\rho}$ be the joint distribution of a $\rho$-correlated pair
$(X^n,Y^n)$ given by \eqref{eq:rho-correlated}. Then, for two Boolean functions
$f,g:\mathbb{F}_2^n \to \{+1,-1\}$, let $P_{f(X^n)g(Y^n),\rho}$ be the joint distribution of 
$(f(X^n),g(Y^n))$. For the parametrized family $\{ P_{f(X^n)g(Y^n),\rho} \}_{\rho \in (-1,1)}$ of distributions, 
let us consider the Fisher information given by
\begin{align*}
G(n,f,g,\rho) := \mathbb{E}\bigg[ \bigg( \frac{d}{d\rho} \ln P_{f(X^n)g(Y^n),\rho}(f(X^n),g(Y^n)) \bigg)^2 \bigg].
\end{align*}
Here we are interested in which Boolean functions $f,g$ maximize $G(n,f,g,\rho)$.
This problem was posed by Amari and Kobayashi \cite{Amari:23,Amari-biograph,KobAma:23}, and they conjectured that the Fisher information
is maximized by parity functions, i.e.,
\begin{align}
G(n,f,g,\rho) &\le \max_{S \subset [n]: \atop S \neq \emptyset} G(n, \chi_S,\chi_S,\rho) \\
&= \max_{1\le k \le n} \frac{k^2 \rho^{2(k-1)}}{1-\rho^{2k}} \label{eq:AK-conjecture}
\end{align}
for any Boolean functions $f,g$ and $\rho \in (-1,1)$.
In fact, since any level-$k$ functions have the same output distribution \eqref{eq:joint-of-level-k-function} as $\chi_S$ for $|S|=k$,
the value \eqref{eq:AK-conjecture} can be attained by any level-$k$ functions.

The Fisher information maximization problem and the divergence maximization problem can be
related by the following well known fact that the Fisher information is the second order derivative of the divergence.
\begin{lemma}
It holds that
\begin{align}
\frac{d^2}{d\rho_0^2} D(P_{f(X^n)g(Y^n),\rho_0} \| P_{f(X^n)g(Y^n),\rho})\bigg|_{\rho_0=\rho} = G(n,f,g,\rho).
\end{align}
\end{lemma}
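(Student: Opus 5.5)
The plan is a direct computation on the finite alphabet $\{\pm1\}^2$, using that the output distribution depends smoothly on $\rho$. By \eqref{eq:joint-distribution-fg}, write $P_{\rho}(u,v) := P_{f(X^n)g(Y^n),\rho}(u,v) = a_u b_v + uv\,\theta_\rho$. Here $a_u,b_v$ do not depend on $\rho$, and $\theta_\rho$ is the polynomial in $\rho$ given by \eqref{eq:theta-definition-2}. So $\rho\mapsto P_\rho(u,v)$ is a polynomial, hence $C^\infty$, and we may differentiate
\begin{align*}
D(P_{\rho_0}\|P_\rho) = \sum_{(u,v)} P_{\rho_0}(u,v)\ln\frac{P_{\rho_0}(u,v)}{P_\rho(u,v)}
\end{align*}
term by term in $\rho_0$, with $\rho$ fixed. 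We restrict the sum to the support $\mathcal{A}_\rho := \{(u,v): P_\rho(u,v)>0\}$. For $(u,v)\notin\mathcal{A}_\rho$ the terms with $P_{\rho}(u,v)=0$ need separate care; this is the only step that requires attention.

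Differentiating once gives
\begin{align*}
\frac{d}{d\rho_0}D = \sum_{(u,v)} \dot P_{\rho_0}(u,v)\Big(\ln\frac{P_{\rho_0}(u,v)}{P_\rho(u,v)} + 1\Big).
\end{align*}
Differentiating again gives
\begin{align*}
\frac{d^2}{d\rho_0^2}D = \sum_{(u,v)} \ddot P_{\rho_0}(u,v)\Big(\ln\frac{P_{\rho_0}(u,v)}{P_\rho(u,v)} + 1\Big) + \sum_{(u,v)}\frac{\dot P_{\rho_0}(u,v)^2}{P_{\rho_0}(u,v)}.
\end{align*}
Evaluate at $\rho_0=\rho$. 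The logarithm vanishes, and $\sum_{(u,v)}\ddot P_\rho(u,v)=0$ because $\sum_{(u,v)} P_\rho(u,v)=1$ for all $\rho$. What remains is
\begin{align*}
\sum_{(u,v)} \frac{\dot P_\rho(u,v)^2}{P_\rho(u,v)} = \sum_{(u,v)} P_\rho(u,v)\Big(\frac{d}{d\rho}\ln P_\rho(u,v)\Big)^2 = G(n,f,g,\rho),
\end{align*}
which is the Fisher information by definition.

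The remaining step is the boundary case where some $P_\rho(u,v)=0$ at an interior $\rho\in(-1,1)$. For a fixed $(u,v)$, the function $P_\rho(u,v)$ is nonnegative on all of $[-1,1]$ by \eqref{eq:Fourier-coefficient-constraint-1}. A zero at an interior point is therefore a minimum, so $\dot P_\rho(u,v)=0$ there as well. I plan to argue that in this case $P_{\rho'}(u,v)$ vanishes identically, or else adopt the convention of the Fisher information definition that only atoms with $P_\rho(u,v)>0$ contribute. If $P_{\rho'}(u,v)\equiv 0$, the atom contributes nothing to either side and can be dropped. Otherwise, with the same convention on the divergence side, the term $P_{\rho_0}\ln(P_{\rho_0}/P_\rho)$ would make $D=+\infty$ for $\rho_0\neq\rho$. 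So the lemma should be read with such atoms absent, and I would note that this happens automatically in the nondegenerate setting where $0<P_\rho(u,v)$ for all atoms.
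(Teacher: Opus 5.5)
Your computation is correct and is the standard argument. The paper gives no proof of this lemma; it cites it as a well-known fact. Later, in the proof of Theorem~\ref{theorem:Fisher-unbiased}, it uses exactly the expression you arrive at, $G(n,f,g,\rho)=\sum_{u,v}\dot P_\rho(u,v)^2/P_\rho(u,v)$.

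The one loose end is your last paragraph. You leave it hedged, and even suggest the lemma may need an extra reading, when it can be settled directly.

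\begin{itemize}
\item Your argument that an interior zero is a minimum, so $\dot P_\rho(u,v)=0$, does not show that the atom vanishes identically. A nonnegative polynomial can have an isolated double root.
\item What settles it is full support. For $|\rho|<1$, every pair $(\bm{x},\bm{y})$ has probability $2^{-n}\big(\frac{1+\rho}{2}\big)^{n-w}\big(\frac{1-\rho}{2}\big)^{w}>0$, where $w$ is the Hamming distance between $\bm{x}$ and $\bm{y}$.
\item Hence $P_\rho(u,v)=\sum_{\bm{x}\in f^{-1}(u)}\sum_{\bm{y}\in g^{-1}(v)}P_{X^nY^n,\rho}(\bm{x},\bm{y})$ is positive if and only if $f^{-1}(u)$ and $g^{-1}(v)$ are both nonempty. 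That condition does not depend on $\rho$.
\item Equivalently, in \eqref{eq:joint-distribution-fg}, $a_u=0$ forces $f$ to be constant. Then $\widehat{f}(S)=0$ for all $S\neq\emptyset$, so $\theta_\rho\equiv 0$ and the atom is identically zero. The same holds for $b_v=0$.
\end{itemize}

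So the support of $P_{f(X^n)g(Y^n),\rho}$ is the same for every $\rho\in(-1,1)$. Zero atoms vanish identically and contribute nothing to either side, and your ``$D=+\infty$'' branch never occurs. On the common support, continuity keeps $P_{\rho_0}(u,v)>0$ near $\rho_0=\rho$, so your term-by-term differentiation is fully justified. No additional convention or hypothesis is needed.
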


Since 
\begin{align*}
\frac{d}{d\rho_0} D(P_{f(X^n)g(Y^n),\rho_0} \| P_{f(X^n)g(Y^n),\rho})\bigg|_{\rho_0=\rho} = 0,
\end{align*}
the Taylor expansion implies
\begin{align*}
G(n,f,g,\rho) = \lim_{\delta\to 0} \frac{2 D(P_{f(X^n)g(Y^n),\rho+\delta} \| P_{f(X^n)g(Y^n),\rho})}{\delta^2}.
\end{align*}
Thus, if we were able to prove
\begin{align} \label{eq:divergence-conjecture}
D(P_{f(X^n)g(Y^n),\rho_0} \| P_{f(X^n)g(Y^n),\rho_1}) \le 
\max_{1\le k \le n} d\bigg( \frac{1+\rho_0^k}{2} \bigg\| \frac{1+\rho_1^k}{2} \bigg)
\end{align}
for any Boolean functions $f,g$ and $\rho_0,\rho_1\in (-1,1)$,\footnote{In fact, for the purpose of proving \eqref{eq:AK-conjecture},
it suffices to prove \eqref{eq:divergence-conjecture} when $\rho_0$ and $\rho_1$ are close.} 
then we have
\begin{align*}
G(n,f,g,\rho) &= \lim_{\delta\to 0} \frac{2 D(P_{f(X^n)g(Y^n),\rho+\delta} \| P_{f(X^n)g(Y^n),\rho})}{\delta^2} \\
&\le \lim_{\delta\to 0} \max_{1\le k \le n} \frac{2d\big( \frac{1+(\rho+\delta)^k}{2} \big\| \frac{1+\rho^k}{2} \big)}{\delta^2} \\
&= \max_{1\le k \le n} \lim_{\delta\to 0} \frac{2d\big( \frac{1+(\rho+\delta)^k}{2} \big\| \frac{1+\rho^k}{2} \big)}{\delta^2} \\
&= \max_{1\le k \le n} \frac{k^2 \rho^{2(k-1)}}{1-\rho^{2k}}.
\end{align*}

From the partial solutions to the divergence maximization problem (Theorem \ref{theorem:unbiased} and Theorem \ref{theorem:biased}),
we can derive the following partial solutions to the Amari-Kobayashi conjecture. For the sake of completeness, we also provide direct
proofs in Sections \ref{subsec:proof-Fisher-unbiased} and \ref{subsec:proof-Fisher-biased}.
\begin{theorem}[Unbiased Functions] \label{theorem:Fisher-unbiased}
Let $\rho \in (-1,1)$. Let $f,g:\mathbb{F}_2^n\to\{\pm1\}$ be unbiased Boolean functions, i.e.,
$\widehat{f}(\emptyset)=\widehat{g}(\emptyset)=0$. Then, we have
\begin{align*}
G(n,f,g,\rho) \le \max_{1\le k \le n} \frac{k^2 \rho^{2(k-1)}}{1-\rho^{2k}}.
\end{align*}
\end{theorem}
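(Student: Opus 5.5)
I would prove Theorem~\ref{theorem:Fisher-unbiased} directly from the explicit output distribution. Since $f$ and $g$ are unbiased, \eqref{eq:joint-distribution-fg-unbiased} shows that $P_{f(X^n)g(Y^n),\rho}(u,v) = \frac{1+uv\,s(\rho)}{4}$ with
\begin{align*}
s(\rho) := \sum_{S\neq\emptyset} \widehat{f}(S)\widehat{g}(S)\rho^{|S|} \in (-1,1).
\end{align*}
The output distribution is therefore a reparametrization of a symmetric binary family, and the Fisher information equals
\begin{align*}
G(n,f,g,\rho) = \frac{s'(\rho)^2}{1-s(\rho)^2}, \qquad s'(\rho) = \sum_{S\neq\emptyset} \widehat{f}(S)\widehat{g}(S)\,|S|\rho^{|S|-1}.
\end{align*}
This is exactly the second derivative $\psi''(0)$ from the proof of Lemma~\ref{lemma:inner-product-convexity}, with $\mu_S-\nu_S$ replaced by the derivative $|S|\rho^{|S|-1}$. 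So I would mimic that Cauchy--Schwarz argument pointwise; no integration in $t$ is needed.

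The steps are as follows. Set $a_S=\widehat{f}(S)$ and $b_S=\widehat{g}(S)$ for $S\neq\emptyset$; both are unit vectors by \eqref{eq:Parseval-Boolean} and unbiasedness. Write $c_S := |S|\rho^{|S|-1}$ and $w_S := 1-\rho^{2|S|} > 0$. Cauchy--Schwarz with $x_S = a_S c_S/\sqrt{w_S}$ and $y_S = b_S\sqrt{w_S}$ gives
\begin{align*}
s'(\rho)^2 \le \bigg(\sum_S \frac{a_S^2 c_S^2}{w_S}\bigg)\bigg(\sum_S b_S^2 w_S\bigg).
\end{align*}
Next, $\sum_S b_S^2 w_S = 1-\sum_S b_S^2\rho^{2|S|}$. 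A second application of Cauchy--Schwarz gives $s(\rho)^2 \le \sum_S a_S^2 \cdot \sum_S b_S^2\rho^{2|S|} = \sum_S b_S^2 \rho^{2|S|}$, so $\sum_S b_S^2 w_S \le 1-s(\rho)^2$. Dividing yields
\begin{align*}
G(n,f,g,\rho) \le \sum_S a_S^2 \frac{|S|^2\rho^{2(|S|-1)}}{1-\rho^{2|S|}}.
\end{align*}
This is a convex combination with weights $a_S^2$ summing to one, so it is at most $\max_{1\le k\le n} \frac{k^2\rho^{2(k-1)}}{1-\rho^{2k}}$.

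The main obstacle is only a degenerate edge case. If $s(\rho)$ were $\pm1$ the formula would break down, but $|s(\rho)|\le \sum_S|a_Sb_S||\rho|^{|S|} \le |\rho| <1$, so it is fine. The case $\rho=0$ is harmless because each $w_S=1$. Alternatively, one could use the limit argument given before the theorem, which derives the bound from Theorem~\ref{theorem:unbiased} applied with $\rho_0=\rho+\delta$ and $\rho_1=\rho$. That route requires justifying the interchange of the limit and the finite maximum, which is straightforward since the maximum is over finitely many $k$. I would present the direct computation as the main proof and mention the limit argument as the alternative.
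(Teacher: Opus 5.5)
Your proof is correct, but it takes a different route from the paper's direct proof.

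The paper's proof runs as follows:
\begin{itemize}
\item It sets $K=\sum_{S\neq\emptyset}|\widehat{f}(S)\widehat{g}(S)|$ and $\alpha(S)=|\widehat{f}(S)\widehat{g}(S)|/K$.
\item It applies convexity of $t\mapsto t^2$ separately to the numerator and to the squared term in the denominator of $s'(\rho)^2/(1-s(\rho)^2)$.
\item It uses $K^2\le 1$ to replace the $1$ in the denominator by $K^2$.
\item It bounds the resulting ratio of averages $\sum_S\alpha(S)c_S^2/\sum_S\alpha(S)w_S$ by the largest individual ratio via the mediant inequality \eqref{eq:convex-max}. Here $c_S=|S|\rho^{|S|-1}$ and $w_S=1-\rho^{2|S|}$ are your quantities.
\end{itemize}
This is the infinitesimal counterpart of the joint-convexity argument in the Remark after the proof of Theorem~\ref{theorem:unbiased}.

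You instead take the infinitesimal counterpart of Lemma~\ref{lemma:inner-product-convexity}, namely the pointwise bound $\psi''\le\phi''$. This gives an average of ratios with weights $\widehat{f}(S)^2$ directly, so you need neither $K$ nor \eqref{eq:convex-max}. By symmetry you also obtain the statement $G(n,f,g,\rho)\le\min\big[\sum_S\widehat{f}(S)^2c_S^2/w_S,\ \sum_S\widehat{g}(S)^2c_S^2/w_S\big]$, which is expressed through a single function's spectrum. Your route also makes visible that the Fisher result is the second-order shadow of exactly the inequality the paper uses for Theorem~\ref{theorem:unbiased}. The paper makes the opposite pairing: Lemma~\ref{lemma:inner-product-convexity} for the divergence, the Remark-style argument for Fisher information.

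The two intermediate bounds are not comparable, mirroring the paper's remark at the divergence level. Take $f=g$ equal to $\chi_{\{1\}}$ when $x_3=0$ and $\chi_{\{2\}}$ when $x_3=1$, which has Fourier weight $1/2$ on each of levels $1$ and $2$. For small $\rho$ the paper's bound is smaller than yours by about $\rho^2/4$. As $\rho\to 1$ yours is smaller; for example, at $\rho=0.9$ yours is about $7.34$ and the paper's about $7.94$. Both end at the same maximum over $k$.
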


\begin{theorem}[Biased Functions] \label{theorem:Fisher-biased}
Let $\rho \in [0,1)$.\footnote{For $\rho=0$, it does not follow from Theorem \ref{theorem:biased} since 
Theorem \ref{theorem:biased} cannot be applied for negative $\rho_0$ in a neighborhood of $\rho_1=0$. 
However, this case can be handled via direct proof; see Section \ref{theorem:Fisher-biased}.} 
Let $f:\mathbb{F}_2^n \to \{\pm 1\}$ be a Boolean function (not necessarily unbiased). Then, we have
\begin{align*}
G(n,f,f,\rho) \le \max_{1\le k \le n} \frac{k^2 \rho^{2(k-1)}}{1-\rho^{2k}}.
\end{align*}
\end{theorem}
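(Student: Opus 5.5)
The plan is to mimic the proof of Theorem \ref{theorem:biased} at the level of second derivatives, so that everything reduces to an explicit one-variable inequality plus a Cauchy--Schwarz step like the one in Lemma \ref{lemma:inner-product-convexity}.

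\textbf{Reduction to an $(\eta,\tau)$-pair.} As in the proof of Theorem \ref{theorem:biased}, I assume $f$ is non-constant and (replacing $f$ by $-f$ if needed) that $\eta=\Pr(f(X^n)=-1)\le 1/2$. Then $(f(X^n),f(Y^n))$ is an $(\eta,\tau(\rho))$-pair with
\[
\tau(\rho)=\sum_{S\neq\emptyset} w_S\rho^{|S|}, \qquad w_S=\frac{\widehat{f}(S)^2}{1-\widehat{f}(\emptyset)^2},\qquad \sum_{S\neq\emptyset} w_S=1 .
\]
For $\rho\in[0,1)$ we have $\tau\in[0,1)$. Since only $\tau$ depends on $\rho$, the chain rule gives $G(n,f,f,\rho)=\tau'(\rho)^2\, I_\eta(\tau)$, where $I_\eta(\tau)=\sum_{u,v}(\partial_\tau P_{UV,\eta,\tau}(u,v))^2/P_{UV,\eta,\tau}(u,v)$ is the Fisher information of the $(\eta,\tau)$-pair with respect to $\tau$.

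\textbf{Key step: $I_\eta(\tau)\le 1/(1-\tau^2)$ for $\tau\ge 0$.} This is the local analogue of Lemma \ref{lemma:reduction-to-unbiased}, and I expect it to be the main obstacle; here it can be done by direct computation. Set $p=\eta(1-\eta)\in(0,1/4]$ and $s=1-\tau$. From \eqref{eq:eta-tau-pair}, each entry has $\tau$-derivative $\pm p$. The diagonal entries are $A=(1-\eta)^2+p\tau$ and $B=\eta^2+p\tau$, and the off-diagonal entries are both $ps$. One checks $A+B=1-2ps$ and $AB=p(ps^2+1-s)$. Hence
\[
I_\eta=\frac{p^2(A+B)}{AB}+\frac{2p}{s}=\frac{p(2-s)}{s(ps^2+1-s)} .
\]
Its $p$-derivative has the sign of $(ps^2+1-s)-ps^2=1-s=\tau\ge 0$. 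So $I_\eta$ is nondecreasing in $p$ and is maximized at $p=1/4$, where it equals $1/(s(2-s))=1/(1-\tau^2)$. This is precisely where $\tau\ge0$, i.e. $\rho\ge 0$, is used. It also covers $\rho=0$, which the footnote says cannot be obtained from Theorem \ref{theorem:biased}.

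\textbf{Averaging over levels.} Write $k_S=|S|$, so that $\tau'=\sum_S w_S k_S\rho^{k_S-1}$. By Cauchy--Schwarz, as in the proof of Lemma \ref{lemma:inner-product-convexity},
\[
\tau'^2\le\Big(\sum_S w_S\frac{k_S^2\rho^{2(k_S-1)}}{1-\rho^{2k_S}}\Big)\Big(\sum_S w_S(1-\rho^{2k_S})\Big).
\]
By Jensen (convexity of $x\mapsto x^2$) we have $\sum_S w_S\rho^{2k_S}\ge\tau^2$, so the second factor is at most $1-\tau^2$. Combining with the key step,
\[
G\le \frac{\tau'^2}{1-\tau^2}\le\sum_S w_S\frac{k_S^2\rho^{2(k_S-1)}}{1-\rho^{2k_S}}\le\max_{1\le k\le n}\frac{k^2\rho^{2(k-1)}}{1-\rho^{2k}},
\]
which is the claim of Theorem \ref{theorem:Fisher-biased}.
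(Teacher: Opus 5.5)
Your proof is correct and follows the paper's route: write $G=\tau'(\rho)^2\,F(\eta)$, show that the bracket is maximized at $\eta=1/2$ with value $1/(1-\tau^2)$ when $\tau\ge 0$, then average over levels. Your closed form in $p=\eta(1-\eta)$, $s=1-\tau$ is equivalent to the paper's $\eta$-derivative in Lemma \ref{lemma:Fisher-proof-biased-unbiased}; the remaining differences are cosmetic, namely that you handle $\rho=0$ uniformly rather than separately, and you finish with one Cauchy--Schwarz step instead of Jensen on numerator and denominator followed by the ratio bound \eqref{eq:convex-max}.
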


\subsection{Proof of Theorem \ref{theorem:Fisher-unbiased}} \label{subsec:proof-Fisher-unbiased}

By using the notation in \eqref{eq:joint-distribution-fg}, we have
\begin{align*}
\frac{d}{d\rho} P_{f(X^n)g(Y^n),\rho}(u,v) = u\cdot v\cdot \frac{d}{d\rho} \theta_\rho 
= \frac{uv}{4} \sum_{S\subset [n]: \atop S\neq \emptyset} |S| \widehat{f}(S) \widehat{g}(S) \rho^{|S|-1}.
\end{align*}
Then, we can write
\begin{align*}
G(n,f,g,\rho) &= \sum_{u,v \in \{\pm 1\}} \frac{\bigg( \frac{d}{d\rho} P_{f(X^n)g(Y^n),\rho}(u,v) \bigg)^2}{P_{f(X^n)g(Y^n),\rho}(u,v)} \\
&= \sum_{u,v \in \{\pm 1\}} \frac{1}{4} \frac{\big( uv \sum_{S: S\neq \emptyset} |S| \widehat{f}(S) \widehat{g}(S) \rho^{|S|-1}\big)^2}{4 a_u b_v + uv \sum_{S: S\neq \emptyset} \widehat{f}(S) \widehat{g}(S) \rho^{|S|}}.
\end{align*}

When $f$ and $g$ are unbiased, since $a_u = b_v = \frac{1}{2}$, we have
\begin{align*}
G(n,f,g,\rho) &= \sum_{u,v \in \{\pm 1\}} \frac{1}{4} \frac{\big( uv \sum_{S: S\neq \emptyset} |S| \widehat{f}(S) \widehat{g}(S) \rho^{|S|-1}\big)^2}{1 + uv \sum_{S: S\neq \emptyset} \widehat{f}(S) \widehat{g}(S) \rho^{|S|}}  \\
&= \frac{\big( \sum_{S: S\neq \emptyset} |S| \widehat{f}(S) \widehat{g}(S) \rho^{|S|-1}\big)^2}{1 - \big(\sum_{S: S\neq \emptyset} \widehat{f}(S) \widehat{g}(S) \rho^{|S|}\big)^2}.
\end{align*}

Let 
\begin{align}
K := \sum_{S: S\neq \emptyset}  |\widehat{f}(S) \widehat{g}(S)|.
\end{align}
If $K=0$, then the Fisher information is $0$. Thus, we assume $K > 0$. 
By the Cauchy-Schwarz inequality, we have
\begin{align}
K &\le \sqrt{\sum_S \widehat{f}(S)^2} \sqrt{\sum_S \widehat{g}(S)^2} =1.
\end{align}
Let 
\begin{align}
\alpha(S) := \frac{|\widehat{f}(S) \widehat{g}(S)|}{K},~\beta(S) := \rom{sgn}(\widehat{f}(S) \widehat{g}(S)).
\end{align}
Then, we can bound the Fisher information as
\begin{align}
G(n,f,g,\rho) &= \frac{K^2 \big(  \sum_{S: S\neq \emptyset} \alpha(S) \beta(S) |S|  \rho^{|S|-1}\big)^2}{1 - K^2 \big(\sum_{S: S\neq \emptyset} \alpha(S) \beta(S) \rho^{|S|}\big)^2} \\
&\le \frac{K^2 \sum_{S: S\neq \emptyset} \alpha(S)  |S|^2  \rho^{2(|S|-1)}}{1 - K^2 \sum_{S: S\neq \emptyset} \alpha(S)  \rho^{2|S|}} \label{eq:Fisher-unbiased-proof-1} \\
&\le \frac{K^2 \sum_{S: S\neq \emptyset} \alpha(S)  |S|^2  \rho^{2(|S|-1)}}{K^2 - K^2 \sum_{S: S\neq \emptyset} \alpha(S)  \rho^{2|S|}}  \label{eq:Fisher-unbiased-proof-2}  \\
&= \frac{ \sum_{S: S\neq \emptyset} \alpha(S)  |S|^2  \rho^{2(|S|-1)}}{  \sum_{S: S\neq \emptyset} \alpha(S) (1 -  \rho^{2|S|})} \\
&\le \max_{S: S\neq \emptyset} \frac{|S|^2 \rho^{2(|S|-1)}}{1 - \rho^{2|S|}} \label{eq:Fisher-unbiased-proof-3} \\
&= \max_{1\le k \le n} \frac{k^2 \rho^{2(k-1)}}{1-\rho^{2k}},
\end{align}
where the inequality \eqref{eq:Fisher-unbiased-proof-1} follows from the convexity 
of $t \mapsto t^2$ and $\beta(S)^2=1$ (used twice for the numerator and denominator, respectively),
the inequality \eqref{eq:Fisher-unbiased-proof-2} follows from $K^2 \le 1$, and the inequality \eqref{eq:Fisher-unbiased-proof-3} follows from
\begin{align} \label{eq:convex-max}
\frac{\lambda A_0 + (1-\lambda) A_1}{\lambda B_0 + (1-\lambda)B_1} \le \max\bigg[ \frac{A_0}{B_0}, \frac{A_1}{B_1} \bigg]
\end{align}
for $A_0,A_1,B_0,B_1 > 0$ and $0\le \lambda\le1$.
\qed

\subsection{Proof of Theorem \ref{theorem:Fisher-biased}} \label{subsec:proof-Fisher-biased}

If $f$ is a constant function, then the Fisher information is $0$. Thus, we assume 
that $f$ is not a constant function, which implies $|\widehat{f}(\emptyset)|\neq 1$. 
Let
\begin{align*}
\eta := \Pr(f(X^n) = +1)
\end{align*}
and
\begin{align*}
\tau(\rho) &:= \frac{1}{4\eta(1-\eta)} \sum_{S: S \neq \emptyset} \widehat{f}(S)^2 \rho^{|S|} \\
&= \frac{1}{1-\widehat{f}(\emptyset)^2}  \sum_{S: S \neq \emptyset} \widehat{f}(S)^2 \rho^{|S|}. 
\end{align*}
Note that
\begin{align*}
\tau^\prime(\rho) = \frac{1}{4\eta(1-\eta)} \sum_{S: S \neq \emptyset} \widehat{f}(S)^2 |S| \rho^{|S|-1}.
\end{align*}
We can write 
\begin{align}
G(n,f,f,\rho) &= \bigg[ \frac{(1-\eta)^2\eta^2}{(1-\eta)^2 + (1-\eta)\eta \tau(\rho)} 
 + \frac{(1-\eta)^2\eta^2}{\eta^2+(1-\eta)\eta \tau(\rho)} + \frac{2(1-\eta)^2\eta^2}{(1-\eta)\eta - (1-\eta)\eta \tau(\rho)}  \bigg] \tau^\prime(\rho)^2.
 \label{eq:Fisher-biased-proof-1}
\end{align}
Since $\tau(0)=0$, the bracket factor in \eqref{eq:Fisher-biased-proof-1} is $1$; furthermore, 
\begin{align*}
\tau^\prime(0) = \frac{1}{4\eta(1-\eta)} \sum_{S: |S| =1} \widehat{f}(S)^2 \le 1.
\end{align*}
Thus, we have $G(n,f,f,0) \le 1$ for any function $f$, which completes the
proof for $\rho=0$. 

In the rest of the proof, we assume $\rho>0$.
We first show the following to reduce the biased case to the unbiased case.
\begin{lemma} \label{lemma:Fisher-proof-biased-unbiased}
For fixed $\tau \in (0,1)$, let
\begin{align*}
F(\eta) := \frac{(1-\eta)^2\eta^2}{(1-\eta)^2 + (1-\eta)\eta \tau} 
 + \frac{(1-\eta)^2\eta^2}{\eta^2+(1-\eta)\eta \tau} + \frac{2(1-\eta)^2\eta^2}{(1-\eta)\eta - (1-\eta)\eta \tau}. 
\end{align*}
Then, we have
\begin{align*}
F(\eta) \le F(1/2) = \frac{1}{1-\tau^2}.
\end{align*}
\end{lemma}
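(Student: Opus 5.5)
The plan is to reduce $F$ to a function of the single quantity $s:=\eta(1-\eta)\in(0,1/4]$ and show that it is non-decreasing in $s$. Since $s$ is maximized exactly at $\eta=1/2$, this gives $F(\eta)\le F(1/2)$.

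\emph{Simplification.} Write $p=\eta$ and $q=1-\eta$, and cancel common factors in each of the three terms. This gives
\begin{align*}
F(\eta) = pq\bigg[\frac{p}{q+p\tau}+\frac{q}{p+q\tau}+\frac{2}{1-\tau}\bigg].
\end{align*}
Combine the first two fractions over a common denominator and use $p+q=1$ and $p^2+q^2=1-2s$. The result is
\begin{align*}
F = \Psi(s) := \frac{s\big(1-2s(1-\tau)\big)}{\tau+s(1-\tau)^2}+\frac{2s}{1-\tau}.
\end{align*}
At $s=1/4$ the denominator equals $(1+\tau)^2/4$, and a direct check gives $\Psi(1/4)=\frac{1}{2(1+\tau)}+\frac{1}{2(1-\tau)}=\frac{1}{1-\tau^2}$. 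This is the value claimed for $F(1/2)$.

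\emph{Monotonicity.} Set $N(s)=s-2s^2(1-\tau)$ and $D(s)=\tau+s(1-\tau)^2$. A short computation gives
\begin{align*}
N'D-ND' = \tau\big(1-4s(1-\tau)\big) - 2s^2(1-\tau)^3 .
\end{align*}
It follows that
\begin{align*}
\Psi'(s) = \frac{\tau\big(1-4s(1-\tau)\big)}{D^2} - \frac{2s^2(1-\tau)^3}{D^2} + \frac{2}{1-\tau}.
\end{align*}
The first term is nonnegative because $s\le 1/4$ and $1-\tau<1$. For the second term, use $D\ge s(1-\tau)^2$:
\begin{align*}
\frac{2s^2(1-\tau)^3}{D^2}\le\frac{2s^2(1-\tau)^3}{s^2(1-\tau)^4}=\frac{2}{1-\tau}.
\end{align*}
So the second term is cancelled by the third, and $\Psi'(s)\ge 0$ on $(0,1/4]$. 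Hence $\Psi(s)\le\Psi(1/4)$ for all $s\in(0,1/4]$, that is, $F(\eta)\le F(1/2)=1/(1-\tau^2)$ for all $\eta\in(0,1)$.

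The main obstacle is the derivative step. The first fraction alone is not monotone in $s$: at $s=1/4$ with $\tau$ small, its derivative is approximately $\tau^2-(1-\tau)^3/8<0$. So the term $2s/(1-\tau)$ has to be used to absorb the negative contribution, and the crude bound $D\ge s(1-\tau)^2$ turns out to be exactly strong enough for this.
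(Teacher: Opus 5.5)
Your proof is correct and is essentially the paper's argument: the paper also shows monotonicity by the sign of the derivative, stating $F'(\eta)=\frac{(1-2\eta)\tau(1+\tau)}{(1-\eta(1-\tau))^2(1-\tau)(\eta+\tau-\eta\tau)^2}$. In your notation this is exactly $(1-2\eta)\Psi'(s)$, since $1-\eta(1-\tau)=q+p\tau$ and $\eta+\tau-\eta\tau=p+q\tau$ multiply to $D$. In fact, putting your three terms of $\Psi'(s)$ over the common denominator $(1-\tau)D^2$ makes all $s$-dependence in the numerator cancel, giving $\Psi'(s)=\tau(1+\tau)/\big((1-\tau)D^2\big)$, so the bound $D\ge s(1-\tau)^2$ is not actually needed.
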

\begin{proof}
Note that $F(0)=0$. We can verify
\begin{align*}
F^\prime(\eta) = \frac{(1-2\eta) \tau(1+\tau)}{(1-\eta(1-\tau))^2(1-\tau)(\eta+\tau-\eta\tau)^2}.
\end{align*}
Thus, $F^\prime(\eta) \ge 0$ for $\eta\in [0,1/2]$ and $F^\prime(\eta)\le 0$ for $\eta \in [1/2,1]$,
and thus $F(\eta)$ takes the maximum at $\eta=1/2$.
\end{proof}

By applying Lemma \ref{lemma:Fisher-proof-biased-unbiased} to \eqref{eq:Fisher-biased-proof-1}, we can bound the Fisher information as
\begin{align}
G(n,f,f,\rho) &\le \frac{\tau^\prime(\rho)^2}{1-\tau(\rho)^2} \\
&= \frac{\bigg( \frac{1}{1-\widehat{f}(\emptyset)^2} \sum_{S: S \neq \emptyset} \widehat{f}(S)^2 |S| \rho^{|S|-1} \bigg)^2}{1 - \bigg(\frac{1}{1- \widehat{f}(\emptyset)^2} \sum_{S: S \neq \emptyset} \widehat{f}(S)^2 \rho^{|S|} \bigg)^2} \\
&\le \frac{\frac{1}{1-\widehat{f}(\emptyset)^2} \sum_{S: S \neq \emptyset} \widehat{f}(S)^2 |S|^2 \rho^{2(|S|-1)}}{1 - \frac{1}{1- \widehat{f}(\emptyset)^2} \sum_{S: S \neq \emptyset} \widehat{f}(S)^2 \rho^{2 |S|} } \label{eq:Fisher-biased-proof-2} \\
&\le \max_{S : S \neq \emptyset} \frac{|S|^2 \rho^{2(|S|-1)}}{1-\rho^{2 |S|} } \label{eq:Fisher-biased-proof-3} \\
&=\max_{1\le k\le n} \frac{k^2 \rho^{2(k-1)}}{1-\rho^{2k}},
\end{align}
where the inequality \eqref{eq:Fisher-biased-proof-2} follows from the convexity of $t \mapsto t^2$, 
and the inequality \eqref{eq:Fisher-biased-proof-3} follows from \eqref{eq:convex-max}. \qed

\section{Bayesian Hypothesis Testing}
\label{section:Bayes}


In this section, we consider Bayesian hypothesis testing with uniform prior distribution.
Two encoders observe $X^n$ and $Y^n$ on $\mathbb{F}_2^n$, respectively, and send one-bit messages 
$f(X^n)$ and $g(Y^n)$ to a receiver using Boolean functions $f,g:\mathbb{F}_2^n\to \{\pm 1\}$.
Then, the receiver decides the hypothesis using a function $\phi:\{\pm 1\}^2 \to \{\mathtt{H}_0,\mathtt{H}_1\}$;
under the hypothesis $\mathtt{H}_c$ for $c\in \{0,1\}$, the observation $(X^n,Y^n)$
is a $\rho_c$-correlated pair (cf.~\eqref{eq:rho-correlated}). Thus, a testing scheme is described by a triplet $(f,g,\phi)$,
and the correct probability is given by
\begin{align} \label{eq:correct-probability-Bayesian}
\Pc(f,g,\phi) := \sum_{c\in \{0,1\}} \frac{1}{2} \Pr_{\rho_c}\big( \phi(f(X^n),g(Y^n)) = \mathtt{H}_c \big).
\end{align}
We are interested in maximizing the correct probability. We first  determine the optimal decision rule for any given encoding functions $f,g$. 

\begin{proposition}\label{prop:decisionrule}
Given any Boolean encoding functions $f,g$, the optimal decision rule $\phi^*$ that maximizes the correct probability is given by 
\begin{align}
\phi^*(u,v) = \left\{
\begin{array}{ll}
\mathtt{H}_0 & \mbox{if } u\cdot v = 1 \\
\mathtt{H}_1 & \mbox{if } u \cdot v = -1
\end{array}
\right. \label{eq:Bayesian-decision-OPT1}
\end{align}
when $\mathbb{E}_{\rho_{0}}[f(X^{n})g(Y^{n})]\ge \mathbb{E}_{\rho_{1}}[f(X^{n})g(Y^{n})]$, and
\begin{align}
\phi^*(u,v) = \left\{
\begin{array}{ll}
\mathtt{H}_1 & \mbox{if } u\cdot v = 1 \\
\mathtt{H}_0 & \mbox{if } u \cdot v = -1
\end{array}
\right. \label{eq:Bayesian-decision-OPT2}
\end{align}
when $\mathbb{E}_{\rho_{0}}[f(X^{n})g(Y^{n})] < \mathbb{E}_{\rho_{1}}[f(X^{n})g(Y^{n})]$. 
Moreover, the correct probability induced by $(f,g,\phi^*)$ is 
\begin{align}  
\Pc(f,g,\phi^*) & = \frac{1}{2} + |p_0 - p_1 | \\
& = \frac{1}{2} + \frac{1}{4} \big|\mathbb{E}_{\rho_{0}}[f(X^{n})g(Y^{n})]-\mathbb{E}_{\rho_{1}}[f(X^{n})g(Y^{n})]\big|,\label{eq:Pc-OPT}
\end{align}
where $p_c= \Pr_{\rho_c}\big( f(X^n)= g(Y^n)= 1 \big)$ for $c \in \{0,1\}.$
\end{proposition}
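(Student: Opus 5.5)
The plan is to use the fact that, for a fixed pair $(f,g)$, the Bayes-optimal rule under a uniform prior is the MAP rule. For each output pair $(u,v)$ we decide $\mathtt{H}_0$ exactly when $P_{f(X^n)g(Y^n),\rho_0}(u,v)\ge P_{f(X^n)g(Y^n),\rho_1}(u,v)$. The resulting correct probability is
\begin{align*}
\frac{1}{2}\sum_{u,v}\max\big[P_{f(X^n)g(Y^n),\rho_0}(u,v),\,P_{f(X^n)g(Y^n),\rho_1}(u,v)\big]
= \frac{1}{2} + \frac{1}{4}\sum_{u,v}\big|P_{f(X^n)g(Y^n),\rho_0}(u,v)-P_{f(X^n)g(Y^n),\rho_1}(u,v)\big|.
\end{align*}
Optimality of this rule is the standard pointwise argument: the objective \eqref{eq:correct-probability-Bayesian} is a sum over $(u,v)$ of $\frac12 P_{\rho_c}(u,v)$ for the chosen $c$, so each term is maximized separately.

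The key step is to evaluate these differences using the explicit joint distribution \eqref{eq:joint-distribution-fg}. The marginals $a_u$ and $b_v$ do not depend on $\rho$, because $X^n$ and $Y^n$ are each uniform regardless of $\rho$. Hence
\begin{align*}
P_{f(X^n)g(Y^n),\rho_0}(u,v)-P_{f(X^n)g(Y^n),\rho_1}(u,v) = u\cdot v\cdot(\theta_{\rho_0}-\theta_{\rho_1}).
\end{align*}
Thus the sign of the likelihood difference depends on $(u,v)$ only through $uv$. The sign of $\theta_{\rho_0}-\theta_{\rho_1}$ equals the sign of $\mathbb{E}_{\rho_0}[f(X^n)g(Y^n)]-\mathbb{E}_{\rho_1}[f(X^n)g(Y^n)]$, by \eqref{eq:theta-definition} and the identity $\mathbb{E}_\rho[f(X^n)g(Y^n)]=\langle f,T_\rho g\rangle$. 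When $\theta_{\rho_0}\ge\theta_{\rho_1}$, the MAP rule therefore picks $\mathtt{H}_0$ iff $uv=1$, which is \eqref{eq:Bayesian-decision-OPT1}. Otherwise it picks $\mathtt{H}_0$ iff $uv=-1$, which is \eqref{eq:Bayesian-decision-OPT2}. Ties do not affect the value, so the choice of $\mathtt{H}_0$ in the equality case is harmless.

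To get the value, each of the four cells contributes $|\theta_{\rho_0}-\theta_{\rho_1}|$, so
\[
\Pc(f,g,\phi^*)=\tfrac12+\tfrac14\cdot 4|\theta_{\rho_0}-\theta_{\rho_1}|=\tfrac12+|\theta_{\rho_0}-\theta_{\rho_1}|.
\]
Because $p_c=a_1b_1+\theta_{\rho_c}$, we have $|\theta_{\rho_0}-\theta_{\rho_1}|=|p_0-p_1|$, which gives the first expression. Using $\theta_{\rho}=\frac14(\mathbb{E}_\rho[fg]-\widehat f(\emptyset)\widehat g(\emptyset))$, the $\widehat f(\emptyset)\widehat g(\emptyset)$ term cancels in the difference, giving \eqref{eq:Pc-OPT}.

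None of these steps is a real obstacle. The only point needing care is that the marginals are $\rho$-independent, so the entire likelihood difference lies in the $\theta_\rho$ term; this is exactly what \eqref{eq:joint-distribution-fg} provides.
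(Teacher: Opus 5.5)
Your argument is correct, but it takes a different route from the paper. The paper does not invoke the MAP principle. It sorts the decision rules by the shape of the set on which $\mathtt{H}_0$ is declared: diagonal, horizontal/vertical, single corner, and trivial. With $\Delta=4(p_0-p_1)$, it computes the correct probability of each class directly from $p_0,p_1$ as $\frac12\pm\frac{\Delta}{4}$, $\frac12$, $\frac12\pm\frac{\Delta}{8}$, and $\frac12$, respectively, and then picks the largest. You instead optimize cell by cell and use \eqref{eq:joint-distribution-fg} to see that the likelihood difference on cell $(u,v)$ is $uv(\theta_{\rho_0}-\theta_{\rho_1})$. This identifies the optimal rule and its value in one step.

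Your route has several advantages. It is shorter, and it covers all $2^4$ rules at once, including those that declare $\mathtt{H}_0$ on three cells, which the paper's case list leaves implicit. It also exhibits the optimal value as $\frac12$ plus half the total variation distance between the two output distributions. Finally, it applies verbatim whenever the marginals of $f(X^n)$ and $g(Y^n)$ do not depend on the hypothesis. The paper's enumeration, in exchange, records how every suboptimal rule performs. For example, rules that depend on only one of the two bits are useless, and single-corner rules achieve only half of the diagonal rule's advantage over $\frac12$.
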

\begin{proof}
We compute the correct probability for all possible decision rules, and then compare them to determine the optimal one. 
By symmetry, without loss of generality, we assume 
\begin{align}
\Delta:=\mathbb{E}_{\rho_{0}}[f(X^{n})g(Y^{n})] - \mathbb{E}_{\rho_{1}}[f(X^{n})g(Y^{n})] \ge 0. 
\end{align}
Note that $\Delta= 4 (p_0 -p_1)$.

We divide all possible decision rules into the following four cases. 

Case 1 (Diagonal Partition): For the decision rule   which outputs  $\mathtt{H}_0$ if and only if $u\cdot v = 1$ (i.e., the one  
given in  \eqref{eq:Bayesian-decision-OPT1}),
the  induced correct probability is 
$\Pc = \frac{1}{2} + p_0 - p_1 = \frac{1}{2} + \frac{\Delta}{4}$, 
which is exactly that given in \eqref{eq:Pc-OPT}. 
Similarly, for the decision rule  which outputs  $\mathtt{H}_0$ if and only if $u\cdot v = -1$, 
the  induced correct probability is 
$\Pc = \frac{1}{2} - \frac{\Delta}{4}$. 

Case 2 (Horizontal or Vertical Partition): For the decision rule which outputs \(\mathtt{H}_0\) if and only if \(u = 1\) (analogously, rules where \(\mathtt{H}_0\) is selected iff \(u=-1\), \(v=1\), or \(v=-1\)),
the induced correct probability is  
$\Pc = \frac{1}{2}$. 

Case 3 (Corner Partition): For the decision rule which outputs \(\mathtt{H}_0\) if and only if \(u = v= 1\), 
the  induced correct probability is  
$\Pc = \frac{1+ p_0 - p_1}{2}  = \frac{1}{2} + \frac{\Delta}{8}$.
Similarly,   given any $s,t\in \{\pm 1\}$, for the decision rule which outputs $\mathtt{H}_0$ if and only if $s u = t v = 1$, the  induced correct probability is either  
$\Pc = \frac{1}{2} + \frac{\Delta}{8}$ or $\Pc = \frac{1}{2} - \frac{\Delta}{8}$. 

Case 4 (Trivial Partition): For the decision rule which always outputs \(\mathtt{H}_0\) regardless of the values of $u$ and $v$,
the  induced correct probability is  
$\Pc = \frac{1}{2}$.
Similarly,  for the decision rule which always outputs \(\mathtt{H}_1\) regardless of the values of $u$ and $v$,
the  induced correct probability is  also 
$\Pc = \frac{1}{2}$.

Comparing the correct probabilities computed above, one can find that the largest one is $\Pc =  \frac{1}{2} + \frac{\Delta}{4}$, which induced by the  decision rule
given in  \eqref{eq:Bayesian-decision-OPT1}. 
\end{proof}

\begin{theorem} \label{theorem:Bayesian-Boolean}
For $\rho_0, \rho_1 \in [-1,1]$ and any testing scheme $(f,g,\phi)$, we have
\begin{align} \label{eq:Bayesian-boolean-cube}
\Pc(f,g,\phi) \le \frac{1}{2} + \frac{1}{4} \max_{1\le k \le n} | \rho_0^k - \rho_1^k|.
\end{align}
Furthermore, when the maximum in the right hand side of \eqref{eq:Bayesian-boolean-cube}
is attained by $k=k^\star$, then the equality in \eqref{eq:Bayesian-boolean-cube} is attained 
by any level-$k^\star$ function $f=g$, and 
\begin{align}
\phi(u,v) = \left\{
\begin{array}{ll}
\mathtt{H}_0 & \mbox{if } u\cdot v = 1 \\
\mathtt{H}_1 & \mbox{if } u \cdot v = -1
\end{array}
\right. \label{eq:Bayesian-decision-1}
\end{align}
when $\rho_0^{k^\star} \ge \rho_1^{k^\star}$ and
\begin{align}
\phi(u,v) = \left\{
\begin{array}{ll}
\mathtt{H}_1 & \mbox{if } u\cdot v = 1 \\
\mathtt{H}_0 & \mbox{if } u \cdot v = -1
\end{array}
\right. \label{eq:Bayesian-decision-2}
\end{align}
when $\rho_0^{k^\star} < \rho_1^{k^\star}$.
\end{theorem}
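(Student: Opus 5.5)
By Proposition~\ref{prop:decisionrule}, every decision rule $\phi$ satisfies $\Pc(f,g,\phi)\le \Pc(f,g,\phi^*)$, so it suffices to bound
\begin{align*}
\frac{1}{4}\big|\mathbb{E}_{\rho_0}[f(X^n)g(Y^n)]-\mathbb{E}_{\rho_1}[f(X^n)g(Y^n)]\big|
\end{align*}
uniformly over all Boolean pairs $(f,g)$, biased or not. We use the Fourier expression $\mathbb{E}_{\rho}[f(X^n)g(Y^n)]=\sum_{S}\widehat{f}(S)\widehat{g}(S)\rho^{|S|}$. The $S=\emptyset$ term is $\widehat{f}(\emptyset)\widehat{g}(\emptyset)$ for both $\rho_0$ and $\rho_1$, so it cancels in the difference. (If $\rho_c=0$ we use the convention $0^0=1$, which still gives cancellation.) We are therefore left with
\begin{align*}
\Big|\sum_{S\neq\emptyset}\widehat{f}(S)\widehat{g}(S)\big(\rho_0^{|S|}-\rho_1^{|S|}\big)\Big|
\le \sum_{S\neq\emptyset}|\widehat{f}(S)\widehat{g}(S)|\,\big|\rho_0^{|S|}-\rho_1^{|S|}\big|.
\end{align*}
Bounding each $|\rho_0^{|S|}-\rho_1^{|S|}|$ by $\max_{1\le k\le n}|\rho_0^k-\rho_1^k|$ and applying Cauchy--Schwarz together with \eqref{eq:Parseval-Boolean}, as in the remark after the proof of Theorem~\ref{theorem:unbiased}, gives $\sum_{S\neq\emptyset}|\widehat{f}(S)\widehat{g}(S)|\le 1$. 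This yields \eqref{eq:Bayesian-boolean-cube}. Here the bias is harmless, unlike in the divergence setting, because the correct probability depends on $(f,g)$ only through a quantity linear in $\widehat{f}(S)\widehat{g}(S)$.

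For achievability, take $f=g$ to be any level-$k^\star$ function; such functions exist, e.g.\ $\chi_S$ with $|S|=k^\star$. By \eqref{eq:joint-of-level-k-function}, or directly from the Fourier formula, $\mathbb{E}_{\rho_c}[f(X^n)f(Y^n)]=\rho_c^{k^\star}$. Proposition~\ref{prop:decisionrule} then identifies the optimal rule as \eqref{eq:Bayesian-decision-1} when $\rho_0^{k^\star}\ge\rho_1^{k^\star}$ and as \eqref{eq:Bayesian-decision-2} otherwise. In either case the correct probability is $\frac12+\frac14|\rho_0^{k^\star}-\rho_1^{k^\star}|$, which matches the bound.

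There is no real obstacle in this argument. The one point to handle carefully is the endpoints $\rho_c=\pm1$, which are allowed here but not in the divergence results. They are unproblematic: the Fourier identities hold for all $\rho\in[-1,1]$, and Proposition~\ref{prop:decisionrule} never needs positivity of the probabilities.
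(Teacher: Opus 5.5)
Your proof is correct and follows essentially the same route as the paper. Both arguments reduce to the optimal rule of Proposition~\ref{prop:decisionrule}, observe that the $S=\emptyset$ term cancels, bound the remaining sum by the triangle inequality and by Cauchy--Schwarz with \eqref{eq:Parseval-Boolean} as a sub-probability average of $|\rho_0^{|S|}-\rho_1^{|S|}|$, and attain equality with a level-$k^\star$ function $f=g$, for which $\mathbb{E}_{\rho_c}[f(X^n)f(Y^n)]=\rho_c^{k^\star}$.
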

\begin{proof}
By using \eqref{eq:joint-distribution-fg} and Proposition \ref{prop:decisionrule}, we can rewrite the correct probability as
\begin{align} \label{eq:proof-Bayesian-2} 
\Pc(f,g,\phi) \le \frac{1}{2} +  |\theta_{\rho_0} - \theta_{\rho_1} |,
\end{align}
where the equality holds if we choose $\phi$ as the optimal decision rule $\phi^*$ given in Proposition \ref{prop:decisionrule}.
Now, by using \eqref{eq:theta-definition-2}, we have
\begin{align} \label{eq:proof-Bayesian-3}
|\theta_{\rho_0} - \theta_{\rho_1} | \le \frac{1}{4} \sum_{S \subset [n]: \atop S \neq \emptyset} | \widehat{f}(S) \widehat{g}(S)| \cdot | \rho_0^{|S|}-\rho_1^{|S|}|.
\end{align}
By the Cauchy-Schwarz inequality and the Parseval identity \eqref{eq:Parseval-Boolean}, we have
\begin{align*}
\sum_{S \subset [n]: \atop S \neq \emptyset} | \widehat{f}(S) \widehat{g}(S)| \le \sqrt{ \sum_{S\subset[n]} \widehat{f}(S)^2 }
\sqrt{ \sum_{S \subset [n]} \widehat{g}(S)^2 } =1.
\end{align*}
Thus, the right hand side of \eqref{eq:proof-Bayesian-3} can be regarded as an average of
$| \rho_0^{|S|}-\rho_1^{|S|}|$ over $S \subset [n]$, and we have
\begin{align} \label{eq:proof-Bayesian-4}
\sum_{S \subset [n]: \atop S \neq \emptyset} | \widehat{f}(S) \widehat{g}(S)| \cdot | \rho_0^{|S|}-\rho_1^{|S|}|
\le \max_{1\le k \le n} |\rho_0^k - \rho_1^k|.
\end{align}
By combining \eqref{eq:proof-Bayesian-2}, \eqref{eq:proof-Bayesian-3}, and \eqref{eq:proof-Bayesian-4},  
we have \eqref{eq:Bayesian-boolean-cube}. The equality condition for \eqref{eq:Bayesian-boolean-cube}
follows from the equality condition for \eqref{eq:proof-Bayesian-2} and that $\theta_{\rho_c} = \frac{1}{4} \rho_c^k$
for a level-$k$ function $f=g$.
\end{proof}

\subsection{Maximal Correlation Difference} \label{subsec:MCD}

The maximal correlation was used by Witsenhausen \cite{witsenhausen:75} to derive an impossibility bound for the non interactive correlation distillation (NICD) problem.
Particularly, for binary double symmetric sources, the probability of agreement among
unbiased functions is maximized by dictator functions. 
Proposition \ref{prop:decisionrule} tells us that when the decision rule is applied in our Bayesian hypothesis testing above, the correct probability of the test 
is given by the difference of the expectations of $fg$ under the two hypotheses.  This prompts us to introduce the following concept as a generalization of maximal correlation.


For finite alphabets ${\cal X}$ and ${\cal Y}$, let $P_{X^nY^n,\rho_0}$ and $P_{X^nY^n,\rho_1}$ be
distributions on ${\cal X}^n\times {\cal Y}^n$ such that
$X^n$ marginal distributions satisfy $P_{X^n,\rho_0}=P_{X^n,\rho_1}=P_{X^n}$ and $Y^n$ marginal distribution
satisfy $P_{Y^n,\rho_0}=P_{Y^n,\rho_1}=P_{Y^n}$. Then, we define the maximal correlation difference (MCD) between the two
distributions as
\begin{align*}
\mathtt{MCD}(P_{X^nY^n,\rho_0},P_{X^nY^n,\rho_1}) := 
\sup_{f,g} \sum_{x^n,y^n}\big( P_{X^nY^n,\rho_0}(x^n,y^n) - P_{X^nY^n,\rho_1}(x^n,y^n) \big)f(x^n)g(y^n),
\end{align*}
where the supremum is taken over all functions $f:{\cal X}^n\to\mathbb{R}$ and ${\cal Y}^n\to\mathbb{R}$
satisfying $\mathbb{E}[f(X^n)]=\mathbb{E}[g(Y^n)]=0$ and $\mathbb{E}[f(X^n)^2]=\mathbb{E}[g(Y^n)^2]=1$.

We consider 
\[
\mathbf{D}:=\left[\frac{P_{X^{n}Y^{n},\rho_{0}}(x^{n},y^{n})-P_{X^{n}Y^{n},\rho_{1}}(x^{n},y^{n})}{\sqrt{P(x^{n})P(y^{n})}}\right]_{(x^{n},y^{n})}
\]
as a matrix, and 
\[
\vec{u}:=(f(x^{n})\sqrt{P(x^{n})})_{x^{n}},\qquad\vec{v}:=(g(y^{n})\sqrt{P(y^{n})})_{y^{n}}
\]
as vectors. Then, we can rewrite $\mathtt{MCD}$ as 
\[
\mathtt{MCD}(P_{X^nY^n,\rho_0},P_{X^nY^n,\rho_1})=\sup_{\vec{u},\vec{v}:\|\vec{u}\|_{2}=\|\vec{v}\|_{2}=1,\vec{u}\cdot\sqrt{P_{X^{n}}}=\vec{v}\cdot\sqrt{P_{Y^{n}}}=0}\vec{u}^{\top}\mathbf{D}\vec{v}.
\]
Noting that $\sqrt{P_{X^{n}}}$ and $\sqrt{P_{Y^{n}}}$ are the left-singular and right-singular vectors of $\mathbf{D}$ for the singular value $0$ (the smallest  singular value).
Thus, the constraints $\vec{u}\cdot\sqrt{P_{X^{n}}}=\vec{v}\cdot\sqrt{P_{Y^{n}}}=0$ are equivalent to that the $\vec{u},\vec{v}$ are respectively in the spaces  spanned by the remaining left-singular and right-singular vectors. Obviously, the supremum is attained by the left-singular and right-singular vectors for the largest singular value, which implies that 
\[
\mathtt{MCD}(P_{X^nY^n,\rho_0},P_{X^nY^n,\rho_1})=\sigma_{1}(\mathbf{D})=\sqrt{\lambda_{1}(\mathbf{D}^{\top}\mathbf{D})}.
\]
Here $\sigma_{1}(\mathbf{D})$ is the largest singular value of $\mathbf{D}$,
and $\lambda_{1}(\mathbf{D}^{\top}\mathbf{D})$ is the largest eigenvalue
of $\mathbf{D}^{\top}\mathbf{D}$. This singular value characterization is
analogous to that for maximal correlation given in  \cite{anantharam2013maximal}. 

By the definition of $\mathtt{MCD}$, we have the following observation.
\begin{proposition}\label{prop:mcd}
For any $\{\pm1\}$-valued functions $f,g$, 
\begin{align*}
|\mathbb{E}_{\rho_{0}}[f(X^{n})g(Y^{n})]-\mathbb{E}_{\rho_{1}}[f(X^{n})g(Y^{n})]| & \le \mathtt{MCD}(P_{X^nY^n,\rho_0},P_{X^nY^n,\rho_1}).
\end{align*}
\end{proposition}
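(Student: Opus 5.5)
The plan is to reduce the claim to the definition of $\mathtt{MCD}$ by removing the means of $f$ and $g$ and then rescaling to unit variance. Write $\bar f := \mathbb{E}[f(X^n)]$ and $\bar g := \mathbb{E}[g(Y^n)]$. These expectations are the same under both hypotheses, because $P_{X^n,\rho_0}=P_{X^n,\rho_1}$ and $P_{Y^n,\rho_0}=P_{Y^n,\rho_1}$. Set $f_0 := f-\bar f$ and $g_0 := g-\bar g$.

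First I expand $f g = f_0 g_0 + \bar f g_0 + \bar g f_0 + \bar f\bar g$ inside $\sum_{x^n,y^n}(P_{X^nY^n,\rho_0}-P_{X^nY^n,\rho_1})f(x^n)g(y^n)$.
\begin{itemize}
\item The term $\bar f\bar g$ contributes $\bar f\bar g(1-1)=0$.
\item The term $\bar f g_0(y^n)$ contributes $\bar f\big(\mathbb{E}_{P_{Y^n}}[g_0]-\mathbb{E}_{P_{Y^n}}[g_0]\big)=0$, since it depends only on $y^n$ and the $Y^n$-marginals agree.
\item The term $\bar g f_0(x^n)$ vanishes in the same way, using the $X^n$-marginals.
\end{itemize}
Hence
\begin{align*}
\mathbb{E}_{\rho_0}[f(X^n)g(Y^n)]-\mathbb{E}_{\rho_1}[f(X^n)g(Y^n)] = \sum_{x^n,y^n}\big(P_{X^nY^n,\rho_0}-P_{X^nY^n,\rho_1}\big)(x^n,y^n)\,f_0(x^n)g_0(y^n).
\end{align*}

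Next I normalize. Since $f$ is $\{\pm1\}$-valued, $\mathbb{E}[f_0(X^n)^2]=1-\bar f^2\le 1$, and likewise $\mathbb{E}[g_0(Y^n)^2]\le 1$. If either $f_0$ or $g_0$ is identically zero, the difference is $0$. In that case the bound holds because $\mathtt{MCD}\ge 0$: the value at any feasible pair $(f,g)$ equals minus the value at $(-f,g)$, so the supremum is nonnegative.

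Otherwise, put $\tilde f := f_0/\|f_0\|$ and $\tilde g := g_0/\|g_0\|$, with norms in $L^2(P_{X^n})$ and $L^2(P_{Y^n})$. These are feasible in the definition of $\mathtt{MCD}$, so the difference is at most $\|f_0\|\,\|g_0\|\,\mathtt{MCD}\le \mathtt{MCD}$.

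For the absolute value, I apply the same bound to the pair $(-f,g)$. This pair is again $\{\pm1\}$-valued, and it flips the sign of the difference.

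No step is a real obstacle. The only point needing care is that the cross terms vanish, which uses exactly the assumption that the two hypotheses share the same $X^n$- and $Y^n$-marginals.
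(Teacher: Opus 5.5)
Your proposal is correct and follows the paper's proof: center $f$ and $g$, rescale to unit variance, use that $\{\pm1\}$-valued functions have variance at most $1$, and invoke the definition of $\mathtt{MCD}$. You also make explicit three points the paper leaves implicit: the cross terms vanish because the two hypotheses share the same marginals, $\mathtt{MCD}\ge 0$ in the constant case, and the absolute value follows by applying the bound to $(-f,g)$.
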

\begin{proof}
  Since the left hand side of the bound is $0$ if either $f$ or $g$ is constant, assume that both functions are not constant. 
  Let $\mu_1,\mu_2$ be the means of $f,g$, respectively, and   $V_1,V_2$ be the variances of $f,g$, respectively. Then, 
  by letting $\tilde{f}=\frac{f-\mu_1}{\sqrt{V_1}}$ and $\tilde{g}=\frac{g-\mu_2}{\sqrt{V_2}}$,
  we have 
  \begin{align*}
  \mathbb{E}_{\rho_{0}}[f(X^{n})g(Y^{n})]-\mathbb{E}_{\rho_{1}}[f(X^{n})g(Y^{n})]
  = \sqrt{V_1V_2}
  \big(\mathbb{E}_{\rho_{0}}[\tilde{f}(X^{n}) \tilde{g}(Y^{n})]-\mathbb{E}_{\rho_{1}}[\tilde{f}(X^{n}) \tilde{g}(Y^{n})]\big).
  \end{align*}
  Since $f$ and $g$ are Boolean, we have $V_1,V_2 \le 1$. Thus, the claim follows from the definition of the MCD.
\end{proof}

This upper bound in  Proposition \ref{prop:mcd}  is not sharp in general, since the $\mathtt{MCD}$ is not always attained by $\{\pm1\}$-valued functions. However, for the binary case, it is indeed sharp. 
We now examine this point. 
For this case, 
\begin{align*}
P_{X^{n}Y^{n},\rho} & =2^{-n}\mathbf{T}^{\otimes n}_{\rho},
\end{align*}
where $\mathbf{T}_{\rho}$ is the semigroup matrix 
\[
\mathbf{T}_{\rho}=\begin{bmatrix}\frac{1+\rho}{2} & \frac{1-\rho}{2}\\
\frac{1-\rho}{2} & \frac{1+\rho}{2}
\end{bmatrix}=\mathbf{U}\Lambda\mathbf{U}^{\top},
\]
with $\mathbf{U}=\frac{1}{\sqrt{2}}\begin{bmatrix}1 & 1\\
1 & -1
\end{bmatrix}$ and $\Lambda=\begin{bmatrix}1 & 0\\
0 & \rho
\end{bmatrix}$. Hence, 
\begin{align*}
\mathbf{D} & =\mathbf{T}^{\otimes n}_{\rho_{0}}-\mathbf{T}^{\otimes n}_{\rho_{1}}\\
 & =\mathbf{U}^{\otimes n}(\begin{bmatrix}1 & 0\\
0 & \rho_{0}
\end{bmatrix}^{\otimes n}-\begin{bmatrix}1 & 0\\
0 & \rho_{1}
\end{bmatrix}^{\otimes n})(\mathbf{U}^{\otimes n})^{\top},
\end{align*}
whose eigenvalues  are $\rho^{k}_{0}-\rho^{k}_{1}$
with multiplicity ${n \choose k}$. This yields $\mathtt{MCD}=\sigma_{1}(\mathbf{D})=\max_{1\le k\le n}|\rho^{k}_{0}-\rho^{k}_{1}|$. 
As observed in Theorem \ref{theorem:Bayesian-Boolean}, this value is attained by any level-$k^\star$ $\{\pm1\}$-valued function $f=g$, where $k^\star$ is the optimal value attaining the maximum above. Therefore, the upper bound in Proposition \ref{prop:mcd} is sharp for the binary case. 

\section{Discussion} \label{section:discussion}

In this paper, we considered the problem of identifying the most discriminative Boolean (MDBF) functions.
For the problem of maximizing divergence, we proved that level-$k$ functions are optimal
when $f$ and $g$ are unbiased, or when $f=g$. For the problem of maximizing Fisher information, we also proved
that level-$k$ functions are optimal under the same conditions. The latter result partially resolve the
conjecture of Amari and Kobayashi \cite{Amari:23,Amari-biograph,KobAma:23}. In the Bayesian hypothesis testing framework, we proved that
level-$k$ functions are optimal among all functions.

When $\rho_1=0$, the problem of maximizing divergence can be viewed as a
two function version of the most informative Boolean function (MIBF) problem.
The two function version of the MIBF problem has already resolved \cite{PicPiaMat:18,Pichler:thesis}, with dictator functions being optimal.
However, for the MDBF problem studied in this paper, the general solution is open. 

A natural extension of the problem studied in this paper is the one function
version of the MDBF problem, i.e., the problem of identifying a Boolean function $f$ that maximizes the divergence
\begin{align*}
D(P_{f(X^n)Y^n,\rho_0}\|P_{f(X^n)Y^n,\rho_1}).
\end{align*}
This problem can be regarded as a generalization of the Courtade-Kumar conjecture \cite{CouKum:14}.
For the one function version of the MIBF problem, it has been conjectured that dictator 
functions are optimal. However, the one function version of the MDBF problem requires 
more careful consideration.

From the partial results obtained in this paper for the two function version,
one might guess that level-$k$ functions are optimal at least among unbiased functions.
However, there are counterexamples for some parameter choices. For instance, 
when $n=3$ and the parameters are $(\rho_0,\rho_1)=(0.3,0.95)$ or $(\rho_0,\rho_1)=(0.95,-0.95)$, we can numerically verify that
the majority function is optimal; but for $(\rho_0,\rho_1)=(0.4,0.95)$, we can numerically verify that level-$2$ functions are optimal. 
Thus, for parameter regimes where $\rho_0 < \rho_1$, or where $\rho_0$ and $\rho_1$
have opposite signs, we currently do not have a clear candidate for the optimal function.
On the other hand, in the regime $0 < \rho_1 < \rho_0 < 1$, we have so far found no
example in which a function other than a level-$k$ function is optimal. 
Similarly, for the one function version of the Fisher information maximization problem,
we have so far found no example that outperforms level-$k$ functions.

\section*{Acknowledgment}

GPT-5.5 Pro was used to assist with calculations in the proof of Lemma 2. Certain ideas leading to the proof of 
Theorem 3 were developed in dialogue with GPT-5.5 Pro. It was also used to assist in editing the manuscript.
SW was supported in part by the Japan Society for the Promotion of Science (JSPS) KAKENHI under Grant
26K02864 and 26H02489, and by JST, CRONOS, Japan Grant
Number JPMJCS25N5.


\bibliographystyle{./IEEEtranS}
\bibliography{Reference}

\end{document}